\newtheorem{definition}{Definition}[section]
\newtheorem{theorem}{Theorem}[section]
\newtheorem{corollary}{Corollary}[theorem]
\newtheorem{lemma}[theorem]{Lemma}
\NewDocumentCommand{\LeftComment}{s m}{%
  \Statex \IfBooleanF{#1}{\hspace*{\ALG@thistlm}}\(\triangleright\) #2}
\renewcommand{\ge}{\geqslant}
\renewcommand{\geq}{\geqslant}
\newcommand{\ratioc}[2]{${#1}$ && ${#2}$ && $\times$\fpeval{trunc({#1} / {#2}, 2)}}
\newcommand{\fv}[1]{\ensuremath{\mathit{fv}({#1})}}
\newcommand{\succs}[1]{\ensuremath{\mathord{\downarrow}{#1}}}
\newcommand{\caesar}{\textsc{c{\ae}sar.bdd}}
\newcommand{\conc}{\ensuremath{\mathbin{\|}}}
\newcommand{\Conc}{\ensuremath{\mathrm{C}}}
\newcommand{\bconc}{\mathop{\mathcal{C}}}
\newcommand{\bind}{\mathop{\bar{\mathcal{C}}}}
\newcommand{\kong}{\textsc{Kong}}
\newcommand{\reduce}{\textsc{Reduce}}
\newcommand{\sift}{\textsc{Sift}}
\newcommand{\red}{\mathbin{{\rightarrow}\!{\bullet}}}
\newcommand{\agg}{\mathbin{{\circ}\!{\rightarrow}}}
\newcommand{\lvl}[1]{\mathrm{lvl}({#1})}
\newcommand{\maseq}[1]{\lfloor {#1} \rfloor}
\newcommand{\comma}{\mathbin{\raisebox{0.2ex}{\textbf{,}}}}
\newcommand{\Nat}{\mathbb{N}}
\newcommand{\reduc}{\vartriangleright}
\newcommand*{\defeq}{\triangleq} 
\newif\ifmarkingSep
\newcommand{\markingScan}[2]{%
  \ifx\relax#1\empty
  \else
    \ifmarkingSep
      {\,}\relax
    \else
      \markingSeptrue
    \fi
    {#1}{\ast}{#2}\relax
    \expandafter\markingScan
  \fi
}
\def \rightarrowfill{\m@th\mathord{\smash-}\mkern-6mu%
  \cleaders\hbox{$\mkern-2mu\mathord{\smash-}\mkern-2mu$}\hfill
  \mkern-6mu\mathord\rightarrow}
\def \Rightarrowfill{\m@th\mathord{\smash-}\mkern-6mu%
  \cleaders\hbox{$\mkern-2mu\mathord{\smash-}\mkern-2mu$}\hfill
  \mkern-6mu\mathord\Rightarrow}
\def \rightarrowfill{\m@th\mathord{\smash-}\mkern-6mu%
  \cleaders\hbox{$\mkern-2mu\mathord{\smash-}\mkern-2mu$}\hfill
  \mkern-6mu\mathord\rightarrow}
\def \Rightarrowfill{\m@th\mathord{\smash=}\mkern-6mu%
  \cleaders\hbox{$\mkern-2mu\mathord{\smash=}\mkern-2mu$}\hfill
  \mkern-6mu\mathord\Rightarrow}
\def \midrightarrowfill{\m@th\mathord{\smash{\raisebox{.2ex}{$\scriptscriptstyle\mid$}}\!\!\,-}\mkern-6mu%
  \cleaders\hbox{$\mkern-2mu\mathord{\smash-}\mkern-2mu$}\hfill
  \mkern-6mu\mathord\rightarrow}
\def \midRightarrowfill{\m@th\mathord{\smash{\raisebox{.1ex}{$\scriptstyle\mid$}}\!\!\!=}\mkern-6mu%
  \cleaders\hbox{$\mkern-2mu\mathord{\smash=}\mkern-2mu$}\hfill
  \mkern-6mu\mathord\Rightarrow}
\newcommand{\overstackrel}[2]{\mathrel{\mathop{#1}\limits^{#2}}}
\newcommand{\trans}[1]{\mathbin{\smash[t]{\overstackrel{\rightarrowfill}{\ #1\ }}}}
\newcommand{\tfg}[2][{}]{\llbracket {#2} \rrbracket_{#1}}
\newcommand{\acomment}[1]{\textcolor{violet}{\small\textit{#1}}}
\providecommand{\keywords}[1]
{
  \small	
  \textbf{\textit{Keywords---}} #1
}
\title{Leveraging polyhedral reductions for solving Petri net reachability problems}
\author[1]{Nicolas Amat}
\author[1]{Silvano Dal Zilio} 
\author[1]{Didier Le Botlan}
\affil[1]{LAAS-CNRS, Universit\'{e} de Toulouse, CNRS, Toulouse, France}
\date{}
\begin{document}

\maketitle
\sloppy

\abstract{%
  We propose a new method that takes advantage of structural
  reductions to accelerate the verification of reachability properties
  on Petri nets.
  Our approach relies on a state space abstraction, called polyhedral
  abstraction, which involves a combination
  between structural reductions and sets of linear arithmetic
  constraints between the marking of places.
  We propose a new data-structure, called a Token Flow Graph (TFG),
  that captures the particular structure of constraints occurring in
  polyhedral abstractions. We leverage TFGs to efficiently solve two
  reachability problems: first to check the reachability of a given
  marking; then to compute the concurrency relation of a net, that is
  all pairs of places that can be marked together in some reachable
  marking.
  Our algorithms are implemented in a tool, called Kong, that we
  evaluate on a large collection of models used during the 2020
  edition of the Model Checking Contest.  Our experiments show that
  the approach works well, even when a moderate amount of reductions
  applies.}

\keywords{Petri nets, Structural reductions, Reachability, Concurrent places}

\maketitle
 

\section{Introduction}

We propose a new method that takes advantage of structural reductions
to accelerate the verification of reachability properties on Petri
nets. In a nutshell, we compute a reduced net $(N_2, m_2)$, from an
initial marked net $(N_1, m_1)$, and prove properties about the
initial net by exploring only the state space of the reduced one. A
difference with previous works on structural reductions,
e.g.~\cite{berthelot_transformations_1987}, is that our approach is
not tailored to a particular class of properties---such as the absence
of deadlocks---but could be applied to more general problems.

To demonstrate the versatility of this approach, we apply it to two specific
problems: first to check the \textit{reachability} of a given marking; then to
compute the \emph{concurrency relation} of a net, that is all pairs of places
that can be marked together in some reachable marking.

On the theoretical side, the correctness of our approach relies on a new state
space abstraction method, that we called \emph{polyhedral abstraction}
in~\cite{tacas,fi2022}, which involves a set of linear arithmetic constraints
between the marking of places in the initial and the reduced net. The idea is to
define relations of the form $(N_1, m_1) \reduc_E (N_2, m_2)$, where $E$ is a
system of linear equations that relates the possible markings of $N_1$ and
$N_2$. More precisely, the goal is to preserve enough information in $E$ so that
we can rebuild the reachable markings of $N_1$ knowing only those of $N_2$.

On the practical side, we derive polyhedral abstractions by computing structural
reductions from an initial net, incrementally. We say in this case that we
compute a \emph{polyhedral reduction}. While there are many examples of the
benefits of structural reductions when model-checking Petri nets, the use of an
equation system ($E$) for tracing back the effect of reductions is new, and we
are hopeful that this approach can be applied to other problems.

Our algorithms rely on a new data structure, called a \emph{Token Flow Graph}
(TFG) in~\cite{spin2021}, that captures the particular structure of constraints
occurring in the linear system $E$. We describe TFGs and show how to leverage
this data structure in order to {accelerate} the computation of solutions for
the two reachability problems we mentioned: (1) marking reachability and (2)
concurrency relation. We use the term acceleration to stress the ``multiplicative
effect'' of TFGs. Indeed, we propose a framework that, starting from a tool for
solving problem (1) or (2), provide an augmented version of this tool that takes
advantage of reductions. The augmented tool can compute the solution for an
initial instance, say on some net $N$, by solving it on a reduced version of
$N$, and then reconstructing a correct solution for the initial instance. In
each case, our approach takes the form of an ``inverse transform'' that relies
only on $E$ and that does not involve expensive pre-processing on the reduced
net.

For the marking reachability problem, we illustrate our approach by
augmenting the tool \sift, which is an explicit-state model-checker
for Petri nets that can check reachability properties on the fly. For
the concurrency relation, we augment the tool \caesar, part of the
CADP toolbox~\cite{pbhg2021,cadp}, that uses BDD techniques to explore
the state space of a net and find concurrent places.

We show that our approach can result in massive speed-ups since the
reduced net may have far fewer places than the initial one, and since
the number of places is often a predominant parameter in the
complexity of reachability problems.

\paragraph{Outline and contributions}
After describing some related works, we define the semantics of Petri
nets and the notion of concurrent places in
Sect.~\ref{sec:petri-nets}.
We define a simplified notion of ``reachability equivalence'' in
Sect.~\ref{sec:polyhedral}, that we call a \emph{polyhedral
  abstraction}. Section~\ref{sec:tfg} and \ref{sec:semantics} contain
our main contributions. We describe Token Flow Graphs (TFGs) in
Sect.~\ref{sec:tfg} and prove several results about them in
Sect.~\ref{sec:semantics}. These results allow us to reason about the
reachable places of a net by playing a token game on the nodes of a
TFG.  We use TFGs to define a decision procedure for the reachability
problem in Sect.~\ref{sec:reachability}. Next,
in Sect.~\ref{sec:algorithm}, we define a similar algorithm for finding
concurrent places and show how to adapt it to situations where we only
have partial knowledge of the residual concurrency relation.

Our approach has been implemented and computing experiments show that reductions
 are effective on a large set of models (Sect.~\ref{sec:experimental}). Our
 benchmark is built from an independently managed collection of Petri nets
 corresponding to the nets used during the 2020 edition of the Model Checking
 Contest~\cite{mcc2019}.  We observe that, even with a moderate amount of
 reductions (say we can remove $25\%$ 
of the places), we can compute complete results much faster with
reductions than without; often by several orders of magnitude. We also
show that we perform well with incomplete relations, where we are both
faster and more accurate.

Many results and definitions were already presented in
\cite{spin2021}. This extended version contains several
additions. First, we extend our method based on polyhedral reductions
and the TFG to the reachability problem, whereas~\cite{spin2021} was
only about computing the concurrency relation. For this second
problem, we give detailed proofs for all our results about TFGs that
justify the intimate connection between solutions of the linear system
$E$ and reachable markings of a net. Our paper also contains
definitions and proofs for new axioms that are useful when computing
incomplete concurrency relations; that is in the case where we only
have partial knowledge on the concurrent places. Finally, we provide
more experimental results about the performance of our tool.


\section{Related work}

We consider works related to the two problems addressed in this paper,
with a particular emphasis on concurrent places, which provides the
most original results. We also briefly discuss the use of structural
reductions in model-checking.

\subsection{Marking reachability}
Reachability for Petri nets is an important and difficult problem with
many practical applications. In this work, we consider the simple
problem of checking whether a given marking $m'_1$ is reachable by
firing a sequence of transitions in a net $N_1$, starting from an
initial marking $m_1$.

In a previous work~\cite{tacas,fi2022}, we used polyhedral abstraction and
symbolic model-checking to augment the verification of ``generalized''
reachability properties, in the sense that we check whether it is
possible to reach a marking that satisfies a property $\phi$ expressed
as a Boolean combination of linear constraints between places, such as
$(p_0 + p_1 = p_2 + 1) \wedge (p_0 \leqslant p_2)$ for example.

This more general problem corresponds to one of the examinations in the
Model Checking Contest (MCC)~\cite{mcc2019}, an annual competition of
model-checkers for Petri nets. Many optimization techniques are used
in this context: symbolic techniques, such as $k$-induction; standard
abstraction techniques used with Petri nets, like stubborn sets and
partial order reduction; the use of the ``state equation''; reduction
to integer linear programming problems; etc.

Assume that $(N_2, m_2)$ is the polyhedral reduction of $(N_1, m_1)$,
with the associated set of equations $E$. The main result
of~\cite{tacas,fi2022} is that it is possible to build a formula $\phi_E$
such that $\phi$ is reachable in $N_1$ if and only if $\phi_E$ is
reachable in $N_2$. This means that we can easily augment any
model-checker---if it is able to handle generalized reachability
properties---so as to benefit from structural reductions for free.

In this paper, we use TFGs to prove a stronger property for the marking
reachability problem (see Sect.~\ref{sec:reachability}), namely that,
given a target marking $m'_1$ for $N_1$, we are able to effectually
compute a marking $m'_2$ of $N_2$ such that $m'_1$ is reachable in
$N_1$ if and only if $m'_2$ is reachable in $N_2$. This can be more
efficient than our previous method, since the transformed property
$\phi_E$ can be quite complex in practice, event though the property
for marking reachability is a simple conjunction of equality
constraints. For instance, we can perform our experiments using only a
basic, explicit-state model-checker.

This application of polyhedral reductions, while not as original as
our results with the concurrency relation, highlights the fact that
TFGs provide an effective method to exploit reductions. It also bears
witness to the versatility of our approach.

\subsection{Concurrency relation}
The main result of our work is a new approach for computing the
\emph{concurrency relation} of a Petri net. This problem has practical
applications, for instance because of its use for decomposing a Petri
net into the product of concurrent
processes~\cite{janicki_automatic_2020,garavel_nested-unit_2019}. It
also provides an interesting example of safety property that nicely
extends the notion of \emph{dead places}; meaning places that can
never be reached in an execution.  These problems raise difficult
technical challenges and provide an opportunity to test and improve
new model-checking techniques~\cite{garavel2021proposal}.

Naturally, it is possible to compute the concurrency relation by
checking, individually, the reachability of each pair of places. But
this amounts to solving a quadratic number of coverability
properties---where the parameter is the number of places in the
net---and one would expect to find smarter solutions, even if it is
only for some specific cases. We are also interested in partial
solutions, where computing the whole state space is not feasible.

Several works address the problem of finding or characterizing the concurrent
places of a Petri net. This notion is mentioned under various names, such as
\emph{coexistency defined by markings}~\cite{janicki_nets_1984},
\emph{concurrency graph}~\cite{wisniewski_prototyping_2018} or \emph{concurrency
relation}~\cite{garavel_state_2004,kovalyov_concurrency_1992,kovalyov_polynomial_2000,semenov_combining_1995,wisniewski2019c}.
The main motivation is that the concurrency relation characterizes the
sub-parts, in a net, that can be simultaneously active. Therefore, it plays a
useful role when decomposing a net into a collection of independent components.
This is the case in~\cite{wisniewski2019c}, where the authors draw a connection
between concurrent places and the presence of ``sequential modules'' (state
machines). Another example is the decomposition of nets into unit-safe NUPNs
(Nested-Unit Petri Nets)~\cite{janicki_automatic_2020,garavel_nested-unit_2019},
for which the computation of the {concurrency relation} is one of the main
bottlenecks.

We know only a couple of tools that support the computation of the
concurrency relation. A recent tool is part of the Hippo
platform~\cite{wisniewski2019c}, available online. Our reference tool
in this paper is \caesar, from the CADP
toolbox~\cite{pbhg2021,cadp}. It supports the computation of a partial
relation and can output the ``concurrency matrix'' of a net using a
specific, compressed, textual format~\cite{garavel2021proposal}. We
adopt the same format since we use \caesar\ to compute the concurrency
relation on the residual net, $N_2$, and as a yardstick in our
benchmarks.

\subsection{Model-checking with reductions}
Concerning our use of structural reductions, our main result can be
interpreted as an example of \emph{reduction
  theorem}~\cite{lipton_reduction_1975}, that allows to deduce
properties of an initial model ($N_1$) from properties of a simpler,
coarser-grained version ($N_2$). But our notion of reduction is more
complex and corresponds to the one pioneered by
Berthelot~\cite{berthelot_transformations_1987}, but with the
addition of linear equations.

Several tools use reductions for checking reachability properties, but none
specializes in computing the concurrency relation. We can mention
\textsf{Tapaal}~\cite{bonneland2019stubborn}, an explicit-state model-checker
that combines partial-order reduction techniques and structural reductions or,
more recently, \textsf{ITS-Tools}~\cite{thierry-mieg_structural_2020}, which
combines several techniques, including structural reductions and the use of SAT
and SMT solvers.

In our work, we focus on reductions that preserve the reachable states
and use ``reduction equations'' to keep traceability information
between initial and reduced nets. Our work is part of a trilogy.

Our approach was first used for \emph{model
  counting}~\cite{berthomieu2018petri,berthomieu_counting_2019}, as a
way to efficiently compute the number of reachable states. It was
implemented in a symbolic model-checker called \textsf{Tedd}, which is
part of the Tina toolbox~\cite{tinaToolbox}.
It also relies on an ancillary tool, called \reduce, that applies
structural reductions and returns the set of reduction equations. We
reuse this tool in our experiments.

The second part of our trilogy~\cite{tacas,fi2022} defines a method for
taking advantage of net reductions in combination with a SMT-based
model-checker and led to a new dedicated tool, called
\textsf{SMPT}. This work introduced the notion of polyhedral
abstraction. The main goal here was to provide a formal framework for
our approach, in the form of a new semantic equivalence between Petri
nets.

Finally, this paper is an extended version of~\cite{spin2021}, that
introduces the notion of Token Flow Graph and describes a new
application, to accelerate the computation of concurrent places. Our
goal here is to provide effective algorithms that can leverage the
notion of polyhedral abstraction. It is, in some sense, the practical
or algorithmic counterpart of the theory developed
in~\cite{tacas,fi2022}. This work is also associated with a new tool, called
\kong, that we describe in Sect.~\ref{sec:experimental}.


\section{Petri nets}
\label{sec:petri-nets}

Some familiarity with Petri nets is assumed from the reader. We recall some
basic terminology. Throughout the text, comparison $(=, \geq)$ and arithmetic
operations $(-, +)$ are extended pointwise to functions and tuples.

\begin{definition}[Petri net]
A \textit{Petri net} $N$ is a tuple $(P, T, \mathrm{Pre}, \mathrm{Post})$ where:
\begin{itemize}
  \item $P = \{p_1, \dots, p_n\}$ is a finite set of places,
  \item $T = \{t_1, \dots, t_k\}$ is a finite set of transitions (disjoint from
  $P$),
  \item $\mathrm{Pre} : T \rightarrow (P \rightarrow \mathbb{N})$ and
$\mathrm{Post} : T \rightarrow (P \rightarrow \mathbb{N})$ are the pre- and
post-condition functions (also called the flow functions of $N$).
\end{itemize}
\end{definition}

We often simply write that $p$ is a place of $N$ when $p \in P$.
A state $m$ of a net, also called a \emph{marking}, is a total mapping $m : P
\rightarrow \mathbb{N}$ which assigns a number of \emph{tokens}, $m(p)$, to each
place of $N$. A marked net $(N, m_0)$ is a pair composed of a net and its
initial marking $m_0$.

A transition $t \in T$ is \textit{enabled} at marking $m \in \Nat^P$ when $m(p)
\ge \mathrm{Pre}(t,p)$ for all places $p$ in $P$. (We can also simply write $m
\geq \mathrm{Pre}(t)$, where $\geq$ stands for the component-wise comparison of
markings.) A marking $m'$ is reachable from a marking $m$ by firing transition
$t$, denoted $m \trans{t} m'$, if: (1) transition $t$ is enabled at $m$; and (2)
$m' = m - \mathrm{Pre}(t) + \mathrm{Post}(t)$. When the identity of the
transition is unimportant, we simply write this relation $m \trans{} m'$.  More
generally, marking $m'$ is reachable from $m$ in $N$, denoted $m \trans{}^\star
m'$ if there is a (possibly empty) sequence of reductions such that $m \trans{}
\dots \trans{} m'$.  We denote $R(N, m_0)$ the set of markings reachable from
$m_0$ in $N$.

A marking $m$ is $k$-{bounded} when each place has at most $k$ tokens and a
marked Petri net $(N, m_0)$ is bounded when there is a constant $k$ such that
all reachable markings are $k$-bounded. While most of our results are valid in
the general case---with nets that are not necessarily bounded and without any
restrictions on the flow functions (the weights of the arcs)---our tool and our
experiments on the concurrency relation focus on the class of $1$-bounded nets,
also called \emph{safe} nets.

\begin{figure}[htbp]
  \centering
  \includegraphics[width=0.75\linewidth]{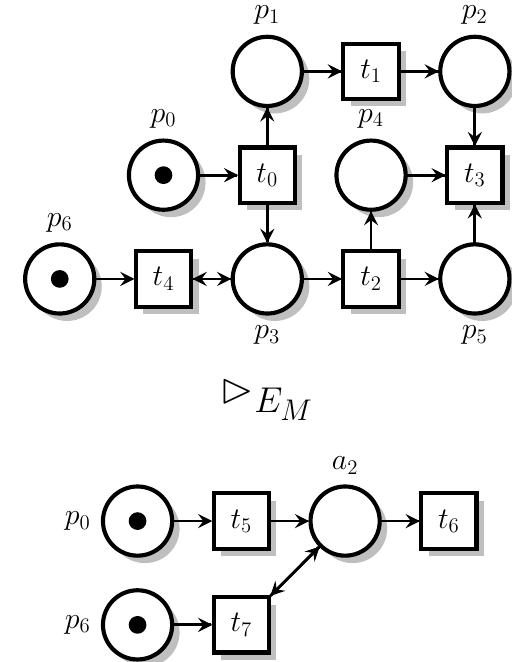}
  \caption{An example of Petri net, $M_1$ (top), and one of its polyhedral
    abstraction, $M_2$ (bottom), with $E_M \defeq (p_5 = p_4), (a_1 = p_1 + p_2),
    (a_2 = p_3 + p_4), (a_1 = a_2)$.\label{fig:stahl}}
\end{figure}

Given a marked net $(N, m_0)$, we say that places $p, q$ of $N$ are concurrent
when there exists a reachable marking $m$ with both $p$ and $q$ marked.
The \textit{concurrent places} problem consists in enumerating all such pairs of
places.

\begin{definition}[Dead and concurrent places]
  We say that a place $p$ of $(N, m_0)$ is \emph{nondead} if there is $m$ in
  $R(N,m_0)$ such that $m(p) > 0$. Similarly, we say that places $p,q$
  are \emph{concurrent}, denoted $p \conc q$, if there is $m$ in $R(N,m_0)$ such
  that both $m(p) > 0$ and $m(q) > 0$. By extension, we use the notation $p
  \conc p$ when $p$ is nondead. We say that $p, q$ are nonconcurrent, denoted
  $p \mathbin{\#} q$, when they are not concurrent.
\end{definition}

\paragraph{Relation with linear arithmetic constraints}
Many results in Petri net theory are based on a relation with linear algebra and
linear programming techniques~\cite{murata1989petri,silva1996linear}. A
celebrated example is that the potentially reachable markings (an
over-approximation of the reachable markings) of a net $(N, m_0)$ are
non-negative, integer solutions to the \emph{state equation} problem, $m = I
\cdot \sigma + m_0$, with $I$ an integer matrix defined from the flow functions
of $N$ called the incidence matrix and $\sigma$ a vector in $\Nat^k$.
It is known that solutions to the system of linear equations $\sigma^T \cdot I =
\vec{0}$ lead to \emph{place invariants}, $\sigma^T \cdot m = \sigma^T \cdot
m_0$, that can provide some information on the decomposition of a net into
blocks of nonconcurrent places, and therefore information on the concurrency
relation. 

For example, for net $M_1$ (Fig.~\ref{fig:stahl}), we can compute invariant $p_4
- p_5 = 0$. This is enough to prove that places $p_4$ and $p_5$ are concurrent,
if we can prove that at least one of them is nondead. Likewise, an invariant of
the form $p + q = 1$ is enough to prove that $p$ and $q$ are $1$-bounded and
cannot be concurrent. Unfortunately, invariants provide only an
over-approximation of the set of reachable markings, and it may be difficult to
find whether a net is part of the few known classes where the set of reachable
markings equals the set of potentially reachable ones~\cite{hujsa:hal-02992521}.

Our approach shares some similarities with this kind of reasoning. A main
difference is that we will use equation systems to draw a relation between the
reachable markings of two nets; not to express constraints about (potentially)
reachable markings inside one net. Like with invariants, this will allow us, in
many cases, to retrieve information about the concurrency relation without
``firing any transition'', that is without exploring the state space.

In the following, we will often use place names as variables, and markings $m :
P \to \Nat$ as partial solutions to a set of linear equations. For the sake of
simplicity, all our equations will be of the form $x = y_1 + \dots + y_l$ or
$y_1 + \dots + y_l = k$ (with $k$ a constant in $\Nat$).

Given a system of linear equations $E$, we denote $\fv{E}$ the set of all its
variables. We are only interested in the non-negative integer solutions of $E$.
Hence, in our case, a \emph{solution} to $E$ is a total mapping from variables
in $\fv{E}$ to $\Nat$ such that all the equations in $E$ are satisfied.  We say
that $E$ is \emph{consistent} when there is at least one such solution.
Given these definitions, we say that the mapping $m: \{p_1, \dots, p_n\} \to
\Nat$ is a (partial) solution of $E$ if the system $E \comma \maseq{m}$ is
consistent, where $\maseq{m}$ is the sequence of equations $p_1 = m(p_1) \comma
\dots\comma p_n = m(p_n)$. (In some sense, we use $\maseq{m}$ as a
substitution.) For instance, places $p, q$ are concurrent if the system $p = 1 +
x \comma q = 1 + y \comma \maseq{m}$ is consistent, where $m$ is a reachable
marking and $x, y$ are some fresh (slack) variables.

Given two markings $m_1 : P_1 \to \Nat$ and $m_2 : P_2 \to \Nat$, from possibly
different nets, we say that $m_1$ and $m_2$ are \emph{compatible}, denoted $m_1
\equiv m_2$, if they have equal marking on their shared places: $m_1(p) =
m_2(p)$ for all $p$ in $P_1 \cap P_2$. This is a necessary and sufficient
condition for the system $\maseq{m_1} \comma \maseq{m_2}$ to be consistent.

\section{Polyhedral abstraction}
\label{sec:polyhedral}

We recently defined a notion of \emph{polyhedral abstraction} \cite{tacas,fi2022} based
on our previous work applying structural reductions to model
counting~\cite{berthomieu2018petri,berthomieu_counting_2019}. We only need a simplified
version of this notion here, which entails an equivalence between the state
space of two nets, $(N_1, m_1)$ and $(N_2, m_2)$, ``up-to'' a system $E$ of
linear equations.

\begin{definition}[$E$-equivalence]
  \label{def:e-equivalence}
  We say that $(N_1, m_1)$ is $E$-equivalent to $(N_2, m_2)$, denoted
  $(N_1, m_1) \reduc_E (N_2, m_2)$, if and only if:
  \begin{description}
  \item [\textbf{(A1)}] $E \comma \maseq{m}$ is consistent for all markings $m$
    in $R(N_1, m_1)$ or $R(N_2, m_2)$;

  \item[\textbf{(A2)}] initial markings are \emph{compatible}, meaning
    $E \comma \maseq{m_1} \comma \maseq{m_2}$ is consistent;

  \item[\textbf{(A3)}] assume $m'_1, m'_2$ are markings of $N_1,N_2$,
    respectively, such that $E\comma \maseq{m'_1}\comma \maseq{m'_2}$ is
    consistent, then $m'_1$ is reachable if and only if $m'_2$ is
    reachable:\par $m'_1 \in R(N_1,m_1) \iff m_2' \in R(N_2, m_2)$.
  \end{description}
\end{definition}

By definition, relation $\reduc_E$ is symmetric. We deliberately use a symbol
oriented from left to right to stress the fact that $N_2$ should be a reduced
version of $N_1$. In particular, we expect to have fewer places in $N_2$ than in
$N_1$.

Given a relation $(N_1, m_1) \reduc_E (N_2, m_2)$, each marking $m'_2$ reachable
in $N_2$ can be associated to a unique subset of markings reachable in $N_1$, defined from
the solutions to $E\comma \maseq{m'_2}$ (by conditions (A1) and (A3)). We can
show~\cite{tacas,fi2022} that this gives a partition of the reachable markings of
$(N_1, m_1)$ into ``convex sets''---hence the name polyhedral abstraction---each
associated to a reachable marking in $N_2$. Our approach is particularly useful
when the state space of $N_2$ is very small compared to the one of $N_1$. In the
extreme case, we can even find examples where $N_2$ is the ``empty'' net (a net
with zero places, and therefore a unique marking), but this condition is not a
requisite in our approach.

We can illustrate this result using the two marked nets $M_1, M_2$ in
Fig.~\ref{fig:stahl}, for which we can prove that $M_1 \reduc_{E_M} M_2$
(detailed in \cite{tacas,fi2022}). We have that $m'_2 \defeq ({p_0}
= {0} \comma {p_6} = {1} \comma {a_2} = {1})$ is reachable in $M_2$, which means that every
solution to the system $p_0 = 0 \comma p_6 = 1 \comma p_4 = p_5 \comma p_1 + p_2
= 1 \comma p_3 + p_4 = 1$ gives a reachable marking of $M_1$. Moreover, every
solution such that $p_i \geq 1$ and $p_j \geq 1$ gives a witness that $p_i \conc
p_j$. For instance, $p_1, p_4, p_5$ and $p_6$ are certainly concurrent together.
We should exploit the fact that, under some assumptions about $E$, we can find
all such ``pairs of variables'' without the need to explicitly solve systems of
the form $E\comma \maseq{m}$; just by looking at the structure of $E$.

For this current work, we do not need to explain how to derive or check that an
equivalence statement is correct in order to describe our method. In practice,
we start from an initial net, $(N_1, m_1)$, and derive $(N_2, m_2)$ and $E$
using a combination of several structural reduction rules. You can find a
precise description of our set of rules in~\cite{berthomieu_counting_2019} and a
proof that the result of these reductions always leads to a valid
$E$-equivalence in~\cite{tacas,fi2022}. The system of linear equations obtained using
this process exhibits a graph-like structure. In the next section, we describe a
set of constraints that formalizes this observation. This is one of the
contributions of this paper, since we never defined something equivalent in our
previous works. We show with our benchmarks (Sect.~\ref{sec:experimental}) that
these constraints are general enough to give good results on a large set of
models.


\section{Token Flow Graphs}
\label{sec:tfg}

We introduce a set of structural constraints on the equations occurring in an
equivalence statement $(N_1, m_1) \reduc_E (N_2, m_2)$. The goal is to define an
algorithm that is able to easily compute information on the concurrency relation
of $N_1$, given the concurrency relation on $N_2$, by taking advantage of the
structure of the equations in $E$.

We define the \textit{Token Flow Graph} (TFG) of a system $E$ of linear
equations as a Directed Acyclic Graph (DAG) with one vertex for each variable
occurring in $E$. Arcs in the TFG are used to depict the relation induced by
equations in $E$. We consider two kinds of arcs. Arcs for \emph{redundancy
equations}, $q \red p$, to represent equations of the form $p = q$ (or $p = q +
r + \dots$), expressing that the marking of place $p$ can be reconstructed from
the marking of $q, r, \dots$ In this case, we say that place $p$ is
\emph{removed} by arc $q \red p$, because the marking of $q$ may influence the
marking of $p$, but not necessarily the other way round. Figure~\ref{fig:red}
illustrates such reduction rules on a subpart of the net $M_1$ given in
Fig.~\ref{fig:stahl}. In this case, place $p_4$ has the same $\mathrm{Pre}$ and
$\mathrm{Post}$ relation than $p_5$, thus both places are redundant. And so, by
removing place $p_5$ we obtain the TFG on the right, corresponding to the
equation $p_5 = p_4$ (modeled by a ``black dot'' arc). 

\begin{figure}[htbp]
  \begin{subfigure}{.6\linewidth}
    \centering
    \includegraphics[width=.6\linewidth]{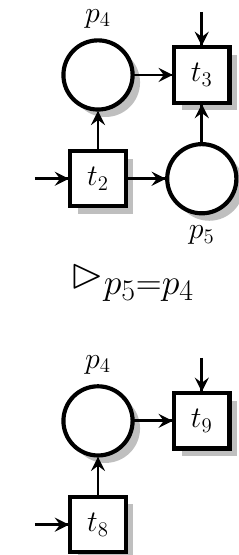}
  \end{subfigure}%
  \vrule
  \begin{subfigure}{.4\linewidth}
    \centering
    \includegraphics[width=.35\linewidth]{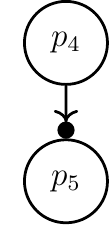}
    \vspace{20mm}
  \end{subfigure}
  \caption{Redundancy reduction applied on a subpart of the net $M_1$ from
  Fig.~\ref{fig:stahl} (left) and its corresponding TFG (right).}
  \label{fig:red}
\end{figure}

The second kind of arcs, $a \agg p$, is for \emph{agglomeration equations}. It
represents equations of the form $a = p + q$, generated when we agglomerate
several places into a new one. In this case, we expect that if we can reach a
marking with $k$ tokens in $a$, then we can certainly reach a marking with $k_1$
tokens in $p$ and $k_2$ tokens in $q$ when $k = k_1 + k_2$ (see property
Agglomeration in Lemma~\ref{lemma:forward_propagation}).  
Hence, the marking of $p$ and $q$ can be reconstructed from the marking of $a$. Thus,
we say that places p and q are removed. We also say that node $a$ is \emph{inserted}; it
does not exist in $N_1$ but may appear as a new place in $N_2$ unless it is
removed by a subsequent reduction. We can have more than two places in an
agglomeration. Figure~\ref{fig:agg} illustrates an example of such
reduction obtained by
agglomerating places $p_1, p_2$ together, in net $M_1$ of Fig.~\ref{fig:stahl}, into a
new place $a_1$. Thus, the TFG in Fig~\ref{fig:agg} (right) depicts the
obtained equation $a_1 = p_1 + p_2$ (modeled by ``white dot'' arcs).

\begin{figure}[htbp]
  \begin{subfigure}{.6\linewidth}
    \centering
    \includegraphics[width=.8\linewidth]{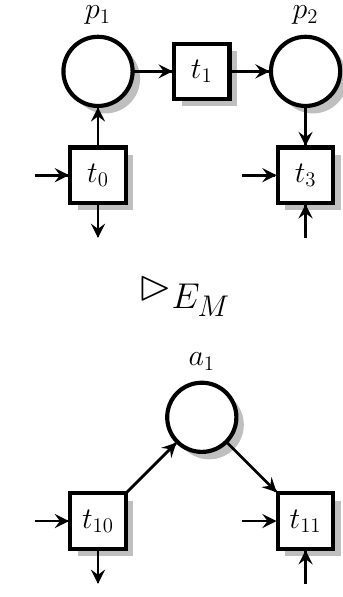}
  \end{subfigure}%
  \vrule
  \begin{subfigure}{.4\linewidth}
    \centering
    \includegraphics[width=0.7\linewidth]{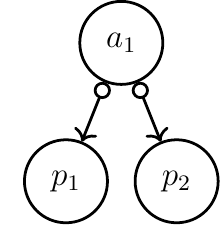}
    \vspace{20mm}
  \end{subfigure}
  \caption{Agglomeration reduction applied on a subpart of the net $M_1$ from
  Fig.~\ref{fig:stahl} (left) and its corresponding TFG (right).}
  \label{fig:agg}
\end{figure}

A TFG can also include nodes for \emph{constants}, used to express invariant
statements on the markings of the form $p + q = k$. To this end, we assume that
we have a family of disjoint sets $K(n)$ (also disjoint from place and variable
names), for each $n$ in $\Nat$, such that the ``valuation'' of a node $v \in
K(n)$ will always be $n$. We use $K$ to denote the set of all constants.
We may write $v^n$ (instead of just $v$) for a constant node whose value is $n$.

\begin{definition}[Token Flow Graph]
  \label{def:tfg}
  A TFG with set of places $P$ is a directed (bi)graph $(V, R, A)$ such that:
  \begin{itemize}
    \item $V = P \cup S$ is a set of vertices (or nodes) with $S \subset K$ a finite set of
    constants,
    \item $R \in V\times V$ is a set of \emph{redundancy arcs}, $v \red v'$,
    \item $A \in V\times V$ is a set of \emph{agglomeration arcs}, $v \agg v'$,
    disjoint from $R$.
  \end{itemize}
\end{definition}

The main source of complexity in our approach arises from the need to manage
interdependencies between $A$ and $R$ arcs, that is situations where
redundancies and agglomerations are combined. This is not something that can be
easily achieved by looking only at the equations in $E$ and thus motivates the
need for a specific data-structure.

We define several notations that will be useful in the following.  We use the
notation $v \rightarrow v'$ when we have $(v \red v')$ in $R$ or $(v \agg v')$
in $A$.  We say that a node $v$ is a \textit{root} if it is not the target of an
arc.
It is a \textit{$\circ$-leaf} when it has no output arc of the form $(v \agg v')$.
A sequence of nodes $(v_1, \dots, v_n)$ in $V^n$ is a \textit{path} if for
all $1 \leqslant i < n$ we have $v_i \rightarrow v_{i+1}$. We use the notation $v
\rightarrow^\star v'$ when there is a path from $v$ to $v'$ in the graph, or
when $v = v'$.  We write $v \agg X$ when $X$ is the largest subset $\{v_1,
\dots, v_k\}$ of $V$ such that $X \neq \emptyset$ and for all $i \in 1..k$, $v
\agg v_i \in A$. Similarly, we write $X \red v$ when $X$ is the largest,
non-empty set of nodes $\{v_1, \dots, v_k\}$ such that for all $i \in 1..k$,
$v_i \red v \in R$.
Finally, the notation $\succs{v}$ denotes the set of successors of $v$, that is:
$\succs{v} \defeq \{ v' \in V \,\mid\, v \rightarrow^\star v'\}$.

We display an example of Token Flow Graphs in Fig.~\ref{fig:HuConstruction_TFG},
where ``black dot'' arcs model edges in $R$ and ``white dot'' arcs model edges
in $A$. The idea is that each relation $X \red v$ or $v \agg X$ corresponds to
one equation $v = \sum_{v_i \in X} v_i$ in $E$, and that all the equations in
$E$ should be reflected in the TFG.
We want to avoid situations where the same place is removed more than once, or
where some place occurs in the TFG but is never mentioned in $N_1, N_2$ or $E$.
We also have the roots (without eventual constant nodes) that match to the
places in $N_2$, and the $\agg$-leaves to the places in $N_1$. All these
constraints can be expressed using a suitable notion of well-formed graph.

\begin{definition}[Well-formed TFG]
  \label{def:well-formed-tfg}
  A TFG $G = (V, R, A)$ for the equivalence statement
  $(N_1, m_1) \vartriangleright_E (N_2, m_2)$ is \emph{well-formed}
  when all the following constraints are met, where $P_1$ and $P_2$ stand
  for the set of places in $N_1$ and $N_2$:
  \begin{description}
  \item[\textbf{(T1)}] \emph{no unused names}: $V \setminus K = P_1 \cup P_2 \cup \fv{E}$;
  \item[\textbf{(T2)}] \emph{nodes in $K$ are roots}: if $v \in V \cap K$ then
    $v$ is a root of $G$;
  \item[\textbf{(T3)}] \emph{nodes can be removed only once}: it is not possible to have
    $p \agg q$ and $p' \rightarrow q$ with $p \neq p'$, or to have
    both $p \red q$ and $p \agg q$;
  \item[\textbf{(T4)}] \emph{we have all and only the equations in $E$}: we
    have $v \agg X$ or $X \red v$ if and only if the equation
    $v = \sum_{v_i \in X} v_i$ is in $E$;
  \item[\textbf{(T5)}] \textit{$G$ is acyclic};
  \item[\textbf{(T6)}] \textit{nodes in $G$ match nets}: the set of roots of
    $G$, without constants $K$, equals the set $P_2$. The set of $\circ$-leaves
    of $G$, without constants $K$, equals the set $P_1$.
  \end{description}
\end{definition}

Given a relation $(N_1, m_1) \vartriangleright_E (N_2, m_2)$, the
well-formedness conditions are enough to ensure the unicity of a TFG (up-to the
choice of constant nodes) when we set each equation to be either in $A$ or in
$R$. In this case, we denote this TFG $\tfg{E}$. In practice, we use a tool
called \reduce\ to generate the $E$-equivalence from the initial net $(N_1, m_1)$.
This tool outputs a sequence of equations suitable to build a TFG and, for each
equation, it adds a tag indicating if it is a Redundancy or an Agglomeration. We
display in Fig.~\ref{fig:HuConstruction_TFG} the equations generated by \reduce\
for the net $M_1$ given in Fig.~\ref{fig:stahl}.

\begin{figure}[htbp]
  \centering
    {
    \begin{varwidth}{\linewidth}
    \normalsize
    \begin{verbatim}
    # R |- p5 = p4
    # A |- a1 = p2 + p1
    # A |- a2 = p4 + p3
    # R |- a1 = a2
    \end{verbatim}
    \end{varwidth}}
    \vspace{\floatsep}
    \includegraphics[width=0.4\textwidth]{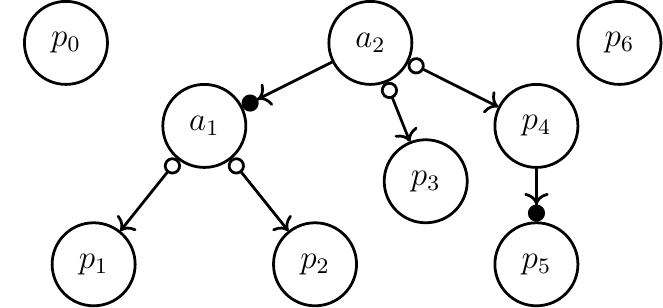}
    \caption{Equations generated from net $M_1$, in
      Fig.~\ref{fig:stahl}, and their associated TFG.}
    \label{fig:HuConstruction_TFG}
  \end{figure}
  
The constraints (T1)--(T6) are not artificial or arbitrary. In practice, we
compute $E$-equivalences using multiple steps of structural reductions, and a
TFG exactly records the constraints and information generated during these
reductions. In some sense, equations $E$ abstract a relation between the
semantics of two nets, whereas a TFG records the structure of reductions between
places.

\section{Semantics}
\label{sec:semantics}

By construction, there is a strong connection between ``systems of reduction
equations'', $E$, and their associated graph, $\tfg{E}$. We show that a similar
relation exists between solutions of $E$ and ``valuations'' of the graph (which
we call \emph{configurations} thereafter).

A \textit{configuration} $c$ of a TFG $(V, R, A)$ is a partial function from $V$
to $\Nat$. We use the notation $c(v)= \bot$ when $c$ is not defined on $v$, and
we always assume that $c(v) = n$ when $v$ is a constant node in $K(n)$.

Configuration $c$ is \textit{total} when $c(v)$ is defined for all nodes $v$ in
$V$; otherwise it is said \textit{partial}.  We use the notation $c_{\mid N}$
for the configuration obtained from $c$ by restricting its domain to the set of
places in the net $N$. We remark that when $c$ is defined over all places of $N$
then $c_{\mid N}$ can be viewed as a marking. As for markings, we say that two
configurations $c$ and $c'$ are \textit{compatible}, denoted $c \equiv c'$, if
they have same value on the nodes where they are both defined: $c(p) = c'(p)$
when $c(v) \neq \bot$ and $c'(v) \neq \bot$. (Same holds when comparing a
configuration to a marking.)  We also use $\maseq{c}$ to represent the system
$v_1 = c(v_1) \comma \dots \comma v_k = c(v_k)$ where the $(v_i)_{i \in 1..k}$
are the nodes such that $c(v_i) \neq \bot$.
We say that a configuration $c$ is \emph{well-defined} when the valuation of the
nodes agrees with the equations of $\tfg{E}$.
\begin{definition}[Well-defined configuration]
  Configuration $c$ is well-defined when for all nodes $p$ the following two
  conditions hold:
  \begin{description}
    \item[\textbf{(CBot)}] if $v \rightarrow w$ then $c(v) = \bot$ if and only if
  $c(w) = \bot$;
  \item[\textbf{(CEq)}] if $c(v) \neq \bot$ and $v \agg X$ or $X \red v$ then
  $c(v) = \sum_{v_i \in X} c(v_i)$.
  \end{description}
\end{definition}

We prove that the well-defined configurations of a TFG $\tfg{E}$ are partial
solutions of $E$, and reciprocally. Therefore, because all the variables in $E$
are nodes in the TFG (condition (T1)) we have an equivalence between solutions of
$E$ and total, well-defined configurations of $\tfg{E}$.

\begin{lemma}[Well-defined configurations are solutions]
  \label{lemma:configuration_satisfiability}
  Assume $\tfg{E}$ is a well-formed TFG for the equivalence $(N_1, m_1)
  \vartriangleright_E (N_2, m_2)$. If $c$ is a well-defined configuration of
  $\tfg{E}$ then $E\comma \maseq{c}$ is consistent. Conversely, if $c$ is a
  total configuration of $\tfg{E}$ such that $E\comma \maseq{c}$ is consistent
  then $c$ is also well-defined.
\end{lemma}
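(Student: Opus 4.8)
The plan is to prove the two implications separately, exploiting the tight correspondence recorded by (T4): every equation $v = \sum_{v_i \in X} v_i$ of $E$ is exactly one hyper-arc $v \agg X$ or $X \red v$ of $\tfg{E}$, and conversely. Together with (T1), which guarantees that $\fv{E}$ consists only of (non-constant) nodes of the graph, this means the variables appearing in $E \comma \maseq{c}$ are all nodes of $\tfg{E}$, so a solution is simply a non-negative integer valuation of the nodes satisfying every equation, where a constant node $v \in K(n)$ is pinned to $n$. I would keep this reformulation explicit, since it reduces both directions to checking equations node by node.

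For the converse (the easy direction), I would assume $c$ is total and $E \comma \maseq{c}$ consistent. Since $c(v) \neq \bot$ everywhere, condition (CBot) holds vacuously: along every arc both sides of the biconditional are false. For (CEq), observe that $\maseq{c}$ forces $v = c(v)$ for every node $v$; as all variables of the system lie in $V$, the only candidate valuation is $c$ itself, and consistency means this candidate satisfies $E$. Hence for each equation $v = \sum_{v_i \in X} v_i$, i.e. each hyper-arc $v \agg X$ or $X \red v$ by (T4), we get $c(v) = \sum_{v_i \in X} c(v_i)$, which is precisely (CEq). So $c$ is well-defined.

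For the forward direction, I would assume $c$ well-defined and build an explicit solution $\sigma$ of $E \comma \maseq{c}$ by extending $c$: set $\sigma(v) = c(v)$ where $c(v) \neq \bot$, and $\sigma(v) = 0$ on the remaining nodes, which are necessarily non-constant since constant nodes are always defined. The equations of $\maseq{c}$ then hold by construction. For an equation $v = \sum_{v_i \in X} v_i$, I would split on whether $c(v) = \bot$. The underlying arcs ($v \to v_i$ for $v \agg X$, or $v_i \to v$ for $X \red v$) let me invoke (CBot) to conclude that $c(v)$ is defined if and only if every $c(v_i)$ is. If $c(v) \neq \bot$ the equation holds by (CEq); if $c(v) = \bot$ then every $v_i$ is undefined and non-constant, so both sides evaluate to $0$. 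Thus $\sigma$ satisfies every equation and $E \comma \maseq{c}$ is consistent.

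The hard part will be exactly this last verification: I must rule out any equation with a defined head but an undefined summand (or vice versa), which would make the $0$-extension inconsistent. This is where (CBot) is essential — it forces definedness to be uniform across each hyper-arc, leaving no ``mixed'' case — together with the standing convention that constant nodes are always defined, so that no offending ``$0 = n$'' with $n > 0$ can arise. The equivalence between solutions of $E$ and total well-defined configurations claimed in the surrounding text then follows immediately by specialising both implications to total $c$ and appealing to (T1).
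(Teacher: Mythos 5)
Your proposal is correct and follows essentially the same route as the paper: both directions hinge on (T4) to identify equations of $E$ with hyper-arcs of $\tfg{E}$, on (CBot) to show that each equation is either fully defined or fully undefined under $c$ (no mixed case, constants included), and on (CEq) to discharge the defined ones; the converse direction is handled identically in both. The only difference lies in the undefined block of the forward direction, where you exhibit an explicit zero-extension as a witness, while the paper instead invokes the consistency of $E$ itself (obtained from the $E$-equivalence) together with the disjointness of variables between the defined and undefined blocks --- your variant is marginally more self-contained, but the substance is the same.
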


\begin{proof}
  We prove each property separately.

  Assume $c$ is a well-defined configuration of $\tfg{E}$. Since $E$ is a system
  of reduction equations, it is a sequence of equalities $\phi_1, \dots, \phi_k$
  where each equation $\phi_i$ has the form $x_i = y_1 + \dots + y_{n}$. Also,
  since $\tfg{E}$ is well-formed we have that $X_i \red v_i$ or $x_i \agg X_i$
  (only one case is possible) with $X_i = \{y_1, \dots, y_{n}\}$ for all indices
  $i \in 1..k$. We define $I$ the subset of indices in $1..k$ such that $c(x_i)$
  is defined. By condition (CBot) we have $c(x_i) \neq \bot$ if and only if
  $c(v) \neq \bot$ for all $v \in X_i$. Therefore, if $c(x_i) \neq \bot$, we
  have by condition (CEq) that $\phi_i\comma \maseq{c}$ is consistent. Moreover,
  the values of all the variables in $\phi_i$ are determined by $\maseq{c}$
  (these variables have the same value in every solution). As a consequence, the
  system combining $\maseq{c}$ and the $(\phi_i)_{i \in I}$ has a unique
  solution. On the opposite, if $c(x_i) = \bot$ then no variables in $\phi_i$
  are defined by $\maseq{c}$. Nonetheless, we know that system $E$ is
  consistent. Indeed, by property of $E$-equivalence, we know that $E\comma
  \maseq{m_1}$ has solutions, so it is also the case with $E$. Therefore, the
  system combining the equations in $(\phi_i)_{i \notin I}$ is consistent. Since
  this system shares no variables with the equations in $(\phi_i)_{i \in I}$, we
  have that $E\comma \maseq{c}$ is consistent.

  For the second case, we assume $c$ total and $E\comma \maseq{c}$ consistent.
  Since $c$ is total, condition (CBot) is true ($c(v) \neq \bot$ for all nodes
  in $\tfg{E}$). Assume we have $(N_1, m_1) \reduc_E (N_2, m_2)$. For condition
  (CEq), we rely on the fact that $\tfg{E}$ is well-formed (T4). Indeed, for all
  equations in $E$ we have a corresponding relation $X \red v$ or $v \agg X$.
  Hence $E\comma \maseq{c}$ consistent implies that $c(v) = \sum_{w \in X}
  c(w)$.
\end{proof}

We can prove several properties related to how the structure of a TFG constrains
possible values in well-defined configurations. These results can be thought of
as the equivalent of a ``token game'', which explains how tokens can propagate
along the arcs of a TFG. This is useful in our context since we can assess that
two nodes are concurrent when we can mark them in the same configuration. (A
similar result holds for finding pairs of nonconcurrent nodes.)

Our first result shows that we can always propagate tokens from a node to its
children, meaning that if a node has a token, we can find one in its
successors (possibly in a different well-defined configuration).
Property (Backward) states a dual result; if a child node is marked then one of
its parents must be marked.

\begin{lemma}[Token propagation]\label{lemma:forward_propagation} Assume
  $\tfg{E}$ is a well-formed TFG for the equivalence $(N_1, m_1)
  \vartriangleright_E (N_2, m_2)$ and $c$ a well-defined configuration of
  $\tfg{E}$.\\[-1.5em]
  \begin{description}
  \item[\textbf{(Agglomeration)}] if $p \agg \{ q_1, \dots, q_k \}$ and $c(p) \neq \bot$
    then for every sequence $(l_i)_{i \in 1..k}$ of $\Nat^k$ such that
    $c(p) = \sum_{i \in 1..k} l_i$, we can find a well-defined configuration $c'$ such that
    $c'(p) = c(p)$, and $c'(q_i) = l_i$ for all $i \in 1..k$, and $c'(v) = c(v)$
    for every node $v$ not in $\succs{p}$.
  
  \item[\textbf{(Forward)}] if $p, q$ are nodes such that $c(p) \neq \bot$ and $p
  \rightarrow^\star q$ then we can find a well-defined configuration $c'$ such
  that $c'(q) \geqslant c'(p) = c(p)$ and $c'(v) = c(v)$ for every node $v$ not
  in $\succs{p}$.
  
  \item[\textbf{(Backward)}] if $c(p) > 0$ then there is a root $v$ such that $v
    \rightarrow^\star p$ and $c(v) > 0$.
  \end{description}
\end{lemma}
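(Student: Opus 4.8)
The plan is to prove \textbf{(Agglomeration)} first, as it is the crux, then derive \textbf{(Forward)} from it, and finally prove \textbf{(Backward)} independently. A preliminary observation simplifies all three: by (CBot) and the hypothesis $c(p) \neq \bot$, every node of $\succs{p}$ is already defined in $c$, so in each case I only ever redefine $c'$ on $\succs{p}$, leaving $c'(v) = c(v)$ elsewhere. Since the domain of $c'$ is then exactly that of $c$, condition (CBot) is inherited for free, and it suffices to verify (CEq).

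For \textbf{(Agglomeration)}, I would rebuild $c'$ on $\succs{p}$ by a single topological sweep, ordering nodes using acyclicity (T5). I set $c'(p) \defeq c(p)$ and $c'(q_i) \defeq l_i$; since $\sum_i l_i = c(p)$, this already validates (CEq) for the agglomeration $p \agg \{q_1, \dots, q_k\}$. I then visit the remaining nodes of $\succs{p}$ in topological order, maintaining the invariant that a node's value is fixed before its outgoing arcs are processed: a redundancy target $Y \red w$ is given $c'(w) \defeq \sum_{y \in Y} c'(y)$, where each parent is either earlier in the order or outside $\succs{p}$ (hence already valued); and at each agglomeration source $v \agg X$ with $v \neq p$ I choose an arbitrary non-negative integer split of the already-determined value $c'(v)$ over the children $X$. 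This is well-defined because, by (T3), every node of $\succs{p}$ other than $p$ and the $q_i$ is either a redundancy target or the target of a unique agglomeration arc, so it is valued exactly once; and an agglomeration target's unique parent also lies in $\succs{p}$ (otherwise no path from $p$ reaches it), so it is split before the child is visited. I then check that every equation whose distinguished node lies outside $\succs{p}$ involves only nodes outside $\succs{p}$ --- a child in $\succs{p}$ would drag its parent in, and a parent in $\succs{p}$ would drag its child in --- so (CEq) is inherited from $c$ for those. The delicate point, which I expect to be the main obstacle, is a node such as $p$ that is simultaneously an agglomeration source and the target of an incoming equation: by acyclicity its incoming parents cannot lie in $\succs{p}$, so they keep their $c$-values and the incoming equation still sums to $c(p) = c'(p)$, with no clash against our fixing of $c'(p)$. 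Guaranteeing this ``valued exactly once, never in conflict'' bookkeeping is precisely where (T3) and (T5) are indispensable; the only genuine freedom --- the splits --- is unconstrained from below, so any choice produces a valid configuration.

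For \textbf{(Forward)}, I would induct on the length of a path $p = u_0 \rightarrow \dots \rightarrow u_m = q$. For a single arc $u \rightarrow u'$ there are two cases. If it is a redundancy arc, then $u'$ is a target with $u$ among its sources, so $c(u') \geq c(u)$ already and $c' = c$ works. If it is an agglomeration arc $u \agg u'$, I apply \textbf{(Agglomeration)} at $u$ with the degenerate split assigning all of $c(u)$ to $u'$ and $0$ to its siblings, yielding $c'(u') = c(u) = c'(u)$. Composing these steps along the path, and using $\succs{u_{j+1}} \subseteq \succs{p}$ together with acyclicity so that previously placed values and $c'(p) = c(p)$ survive, gives a well-defined $c'$ with $c'(q) \geq c'(p) = c(p)$ that agrees with $c$ outside $\succs{p}$.

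For \textbf{(Backward)}, I would argue by descending induction along the arcs, which is well-founded because the graph is a finite DAG (T5). If $c(p) > 0$ and $p$ is a root, take $v = p$. Otherwise $p$ has incoming arcs, all relevant parents being defined by (CBot). If $p$ is a redundancy target $Y \red p$, then $c(p) = \sum_{y \in Y} c(y) > 0$ forces some parent with a positive value; if $p$ is an agglomeration target with parent $a$, then (CEq) gives $c(a) \geq c(p) > 0$. Either way I move to a strictly higher, positively-valued node and recurse, terminating at a root $v$ with $v \rightarrow^\star p$ and $c(v) > 0$.
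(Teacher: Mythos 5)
Your proof is correct and follows essentially the same route as the paper's: a top-down (topological) reconstruction of $c'$ on $\succs{p}$ for (Agglomeration), using (T3) and acyclicity to guarantee each node is valued exactly once and that boundary equations are preserved; induction on path length with a degenerate split for (Forward); and reverse induction on the DAG for (Backward). The only cosmetic differences are that the paper fixes a canonical split (all tokens to one designated child) where you allow an arbitrary one, and that your boundary check at the ``delicate point'' should also note that the \emph{siblings} of $p$ under an incoming agglomeration arc lie outside $\succs{p}$ (which follows from (T3) and (T5) exactly as for $p$'s parents).
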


\begin{proof}
  We prove each property separately.
  Without loss of generality, we assume there exists an (arbitrary) total ordering on nodes.
  \\[-1em]
  
  \noindent\textbf{Agglomeration}:
  let $p$ be a node such that $p \agg X$, with $X = \{ q_1, \dots, q_k \}$,
  and let $(l_1, \dots, l_k) \in \Nat^k$ be a sequence such that
  $c(p) = \sum_{i \in 1 ..  k} l_i$.
  We define configuration $c'$ as a recursive function.
  The base cases are defined by: $c'(p) = c(p)$, for all $i \in 1..k$, $c'(q_i) = l_i$, and $c'(v) = c(v)$ for all the nodes $v$ such that
  $v \notin \succs{p}$.
  The recursive cases concern only the nodes that are successors of nodes in $X$.
  Let $w$ be such a node. It cannot be a root (since $w$ is a successor of a node in $X$),
  hence it has at least one parent $x$. We consider two cases:
  \begin{itemize}
  \item[$\cdot$] Either $x \red w$ holds: then, let $Y$ be the set of parents of $w$ (as expected, $x \in Y$).
    Property (T3) of Definition~\ref{def:well-formed-tfg} implies that $Y \red w$ holds, and we define
    $c'(w) = \sum_{y \in Y} c'(y)$.    

  \item[$\cdot$] Or $x \agg w$ holds: then $x$ is the only parent of $w$ by property (T3).
    Let $Y$ be the set of agglomeration children of $x$: we have $x \agg Y$, and $w \in Y$.
    We define $c'(w) = c'(x)$ if $w$ is the smallest node of $Y$ (according to the total ordering on nodes),
    or $c'(w) = 0$ otherwise. This entails $c'(x) = \sum_{y \in Y} c'(y)$ (where all terms are defined as zero except one defined as $c'(x)$).
    
  \end{itemize}
  Note that the recursion always implies the parents of a given node, and is therefore well-founded since a TFG is a DAG.
  It is immediate to check that $c'$ is well-defined: by construction, (CEq) is satisfied on all nodes where $c'$ is defined.\\[-1em]

  \noindent\textbf{Forward}: take a well-defined configuration $c$
  of $\tfg{E}$ and assume we have two nodes $p,q$ such that $c(p) \neq \bot$ and
  $p \rightarrow^\star q$. The proof is by induction on the length of the path
  from $p$ to $q$.
  The initial case is when $p = q$, which is trivial.
  Otherwise, assume $p \rightarrow r \rightarrow^\star q$. It is enough to find
  a well-defined configuration $c'$ such that $c'(r) \geqslant c'(p) = c(p)$.
  Since the nodes not in $\succs{p}$ are not in the paths from $p$ to $q$, we
  can ensure $c'(v) = c(v)$ for any node $v$ not in $\succs{p}$.
  The proof proceeds by a case analysis on $p \rightarrow r$:

  \begin{itemize}
    \item[$\cdot$] Either $X \red r$ with $p \in X$. Then by (CEq) we have $c(r)
    = c(p) + \sum_{v \in X, v \neq p} c(v) \geqslant c(p)$ and we can choose
    $c'= c$.
    \item[$\cdot$] Or we have $p \agg X$ with $r \in X$. By (Agglomeration) we
    can find a well-defined configuration $c'$ such that $c'(r) = c'(p) = c(p)$
    (and also $c'(v) = 0$ for all $v \in X \setminus \{ r\}$).
  \end{itemize}

  \noindent\textbf{Backward}: let $c$ be a well-defined configuration of
  $\tfg{E}$ and assume we have $c(p) > 0$. The proof is by reverse structural
  induction on the DAG. If $p$ is a root, the result is immediate ; otherwise,
  $p$ has at least one parent $q$ such that $q \rightarrow p$. As above, we
  proceed by case analysis.

  \begin{itemize}
  \item[$\cdot$] Either $X \red p$ with $q \in X$.  By (CEq), we have $c(p) =
    \sum_{v \in X} c(v) > 0$. Hence there must be at least one node $q'$ in $X$
    such that $c(q') > 0$ and we conclude by induction hypothesis on $q'$, which
    is a parent of $p$.
  \item[$\cdot$] Or we have $q \agg X$ with $p \in X$. By (CEq), we have $c(q) =
    \sum_{v \in X} c(v) \geqslant c(p) > 0$, and we conclude by induction
    hypothesis on $q$, which is again a parent of $p$.    
  \end{itemize}

  \quad
\end{proof}

Until this point, none of our results rely on the properties of $E$-equivalence.
We now prove that there is an equivalence between the reachable markings --of
$N_1$ and $N_2$-- and configurations of $\tfg{E}$. More precisely, we prove
(Theorem~\ref{th:configuration_reachability}) that every reachable marking in
$N_1$ or $N_2$ can be extended into a well-defined configuration of $\tfg{E}$.
This entails that we can reconstruct all the reachable markings of $N_1$ by
looking at well-defined configurations obtained from the reachable markings of
$N_2$. Our algorithm for computing the concurrency relation (see
Sect.~\ref{sec:algorithm}) will be a bit smarter since we do not need to
enumerate exhaustively all the markings of $N_2$. Instead, we only need to know
which roots can be marked together.

\begin{theorem}[Configuration reachability]\label{th:configuration_reachability}
  Assume $\tfg{E}$ is a well-formed TFG for the equivalence $(N_1, m_1)
  \vartriangleright_E (N_2, m_2)$. If $m$ is a marking in $R(N_1, m_1)$ or
  $R(N_2, m_2)$ then there exists a total, well-defined configuration $c$ of
  $\tfg{E}$ such that $c \equiv m$. Conversely, given a total, well-defined
  configuration $c$ of $\tfg{E}$, marking $c_{\mid N_1}$ is reachable in
  $(N_1, m_1)$ if and only if $c_{\mid N_2}$ is reachable in $(N_2, m_2)$.
\end{theorem}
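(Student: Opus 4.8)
The statement splits into two implications, and the plan is to prove them separately, leaning entirely on the two preceding lemmas and on the axioms (A1)--(A3) of $E$-equivalence. The first is the \emph{extension} direction: every reachable marking of either net lifts to a total, well-defined configuration compatible with it. Here I would invoke axiom (A1) directly, which guarantees that $E \comma \maseq{m}$ is consistent whenever $m \in R(N_1,m_1)$ or $m \in R(N_2,m_2)$. Fix such an $m$; by symmetry of $\reduc_E$ I may assume $m \in R(N_1,m_1)$. Take a solution $s$ of $E \comma \maseq{m}$, which assigns a value to every variable in $\fv{E} \cup P_1$, set $c(v) = s(v)$ on $V \setminus K$ and $c(v) = n$ on each constant node $v \in K(n)$, and observe that $c \equiv m$ is immediate since $s$ satisfies $\maseq{m}$.

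The delicate point in this first part is showing that $c$ is \emph{total}, i.e. that $s$ is in fact defined on all of $V \setminus K = P_1 \cup P_2 \cup \fv{E}$ (by (T1)); the only nodes not obviously covered are the places of $P_2$. I would argue that if such a place occurs in some equation of $E$ then it lies in $\fv{E}$, and is covered; otherwise, by (T4) it carries no arc at all, so it is simultaneously a root and a $\circ$-leaf, whence (T6) places it in $P_1 \cap P_2$, and it is covered by $\mathrm{dom}(m) = P_1$. Once totality is established, $c$ is a total configuration for which $E\comma\maseq{c}$ is consistent (the constant values agree with $s$ because $s$ already satisfies the invariant equations of $E$), so the converse part of Lemma~\ref{lemma:configuration_satisfiability} yields that $c$ is well-defined.

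For the second part, the reachability equivalence, I would reduce directly to axiom (A3). Given a total, well-defined configuration $c$, write $m'_1 = c_{\mid N_1}$ and $m'_2 = c_{\mid N_2}$, which are genuine markings because $c$ is total. The forward direction of Lemma~\ref{lemma:configuration_satisfiability} gives that $E \comma \maseq{c}$ is consistent; and since $\maseq{m'_1}\comma\maseq{m'_2}$ only pins the subset $P_1 \cup P_2$ of nodes already pinned by $\maseq{c}$, every solution of $E\comma\maseq{c}$ solves $E\comma\maseq{m'_1}\comma\maseq{m'_2}$, so the latter is consistent. Axiom (A3) then applies verbatim to $m'_1,m'_2$ and gives $m'_1 \in R(N_1,m_1) \iff m'_2 \in R(N_2,m_2)$, which is exactly the claim.

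I expect the substantive work of the theorem to be already discharged by Lemma~\ref{lemma:configuration_satisfiability} and by the axioms of $E$-equivalence; in particular the token-propagation lemma (Lemma~\ref{lemma:forward_propagation}) is \emph{not} needed here, and is reserved for the later concurrency algorithm. The only genuinely non-bookkeeping step is the totality argument of the second paragraph, where the well-formedness conditions (T4) and (T6) are essential to rule out nodes that a raw solution of $E \comma \maseq{m}$ might otherwise leave unvalued; everything else is a clean assembly of (A1), (A3), and the equivalence between total well-defined configurations and solutions of $E$.
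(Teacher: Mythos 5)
Your proof is correct and follows essentially the same route as the paper's: direction one is axiom (A1) plus extracting a solution of $E\comma\maseq{m}$ and invoking the converse part of Lemma~\ref{lemma:configuration_satisfiability}, and direction two is the forward part of that lemma followed by axiom (A3). Your explicit totality argument via (T1), (T4) and (T6) spells out a step the paper merely asserts (that the solution covers all nodes of the TFG), but this is a refinement of the same argument, not a different one.
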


\begin{proof}
  We prove each point separately.
  
  First, we take a marking $m$ in $R(N_1, m_1)$
  (the case $m$ in $R(N_2, m_2)$ is similar). By property (A1) of
  Definition~\ref{def:e-equivalence}, $E \comma \maseq{m}$ is consistent. Hence, it
  admits a non-negative integer solution $c$, meaning a valuation for all the
  variables and places in $\fv{E}, N_1$ and $N_2$ such that $E\comma \maseq{c}$
  is consistent and $c(p) = m(p)$ if $p \in N_1$.
  We may freely extend $c$ to include constants in $K$ (whose values are fixed),
  thus according to condition (T1) of Definition~\ref{def:well-formed-tfg}, this
  solution $c$ is defined over all the nodes of $\tfg{E}$. It is well-defined by
  virtue of Lemma~\ref{lemma:configuration_satisfiability}.

  For the converse property, we assume that $c$ is a total, well-defined
  configuration of $\tfg{E}$ and that $c_{\mid N_1}$ is in $R(N_1, m_1)$ (the
  case $c_{\mid N_2}$ in $R(N_2, m_2)$ is again similar).
  Since $c$ is a well-defined configuration, from
  Lemma~\ref{lemma:configuration_satisfiability} we have that $E\comma
  \maseq{c}$ is consistent. Therefore we have that $E\comma \maseq{c_{\mid
  N_1}}\comma \maseq{c_{\mid N_2}}$ is consistent. By condition (A3) of
  Definition~\ref{def:e-equivalence}, we have $c_{\mid N_2}$ in $R(N_2, m_2)$, as
  needed.
\end{proof}

The second result of this theorem justifies the following definition.

\begin{definition}[Reachable configuration]
  \label{def:reachable-config}
  Configuration $c$ is reachable for an equivalence statement $(N_1, m_1)
  \vartriangleright_E (N_2, m_2)$ if $c$ is total, well-defined and $c_{\mid N_1} \in
  R(N_1, m_1)$  (resp. $c_{\mid N_2} \in R(N_2, m_2)$).
\end{definition}

The previous fundamental results demonstrate the possibilities of TFGs to reason
about the state space of the initial net from the one of the reduced net, and
vice versa.

\section{Marking reachability}
\label{sec:reachability}

We illustrate the benefit of Token Flow Graphs by describing a
simple model-checking algorithm. The goal is to decide if a marking, say $m_1'$, is reachable in the
initial net $(N_1, m_1)$, by checking a reachability property  on the
smaller net $(N_2,m_2)$.
We start by proving some auxiliary results.
\begin{lemma}[Unicity of marking reduction]\label{lemma:marking_reduction_unicity}
  Assume $\tfg{E}$ is a well-formed TFG for the equivalence $(N_1, m_1)
  \vartriangleright_E (N_2, m_2)$.
  Given a marking $m_1'$ of $N_1$ there exists at most one total, well-defined
  configuration $c$ such that $c \equiv m_1'$.
\end{lemma}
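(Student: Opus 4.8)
The plan is to prove the stronger fact that the value $c(v)$ of \emph{every} node $v$ of $\tfg{E}$ is already uniquely determined by $m_1'$; uniqueness of $c$ then follows at once, since any two total, well-defined configurations compatible with $m_1'$ would assign the same value to every node and hence coincide. To set up this determination, I first locate where the values are forced at the outset. Constant nodes carry a value fixed by definition ($c(v) = n$ for $v \in K(n)$). Moreover, since $m_1'$ is a marking of $N_1$, compatibility $c \equiv m_1'$ means $c(p) = m_1'(p)$ for every place $p \in P_1$, and by condition (T6) the places of $P_1$ are exactly the non-constant $\circ$-leaves of $\tfg{E}$. Hence $c$ is pinned down on all $\circ$-leaves and on all constants before any propagation.

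Next I would propagate these values through the graph. The key observation is that any node $v$ that is \emph{not} a $\circ$-leaf has, by definition, an outgoing agglomeration arc, hence a unique maximal relation $v \agg X$ (uniqueness by maximality of $X$, recorded by a single equation in $E$ via (T4)). Because $c$ is total, condition (CEq) forces $c(v) = \sum_{v_i \in X} c(v_i)$, expressing $c(v)$ purely in terms of the successors $v_i$ of $v$. As $\tfg{E}$ is a finite acyclic graph (T5), the proper-successor relation is a well-founded strict partial order, so I would conclude by strong induction over it, distinguishing two cases for a node $v$: if $v$ is a constant or a non-constant $\circ$-leaf, its value is already fixed directly, without appeal to the hypothesis; otherwise $v \agg X$ holds and $c(v) = \sum_{v_i \in X} c(v_i)$ is determined from the successors $v_i$, whose values are supplied by the induction hypothesis.

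The argument is essentially careful bookkeeping rather than a deep difficulty, and the one point to treat with attention is the interaction between the two kinds of arcs. A node may simultaneously be the head of a redundancy relation $X \red v$ and the tail of an agglomeration relation $v \agg X'$; I would emphasise that only the latter is used to compute $c(v)$ from its successors, the redundancy relation contributing merely an additional consistency constraint that is irrelevant to \emph{uniqueness}. Conditions (T2) (no arc enters a constant) and (T3) (each node is removed at most once) are precisely what guarantee that each node carries a single agglomeration equation and that following agglomeration arcs strictly descends in the DAG, so that the recursion bottoms out exactly at the $\circ$-leaves in $P_1$. Combining the two cases across all nodes yields $c = c'$ for any pair $c, c'$ of such configurations, which is the claim.
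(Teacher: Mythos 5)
Your proof is correct and follows essentially the same route as the paper's: both hinge on (T6) to conclude that every node outside $P_1$ (and outside $K$) has an agglomeration output, and then use (CEq) together with acyclicity to propagate the determination of values upward from the $\circ$-leaves. The paper phrases this as a minimal-counterexample contradiction (choosing a disagreeing node whose agglomeration children all agree), whereas you run the same argument as a direct well-founded induction; the only cosmetic slip is that the well-foundedness of the descent rests on (T5), not on (T2)/(T3) as your last paragraph suggests.
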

\begin{proof}
  Let $c_1$ and $c_2$ be two total, well-defined configurations such that
  $c_1 \equiv m_1'$ and $c_2 \equiv m_1'$. 
  Let $X$ be the set of nodes $x$ such that $c_1(x) \neq c_2(x)$.
  By contradiction, we assume $X$ is not empty (that is, we assume $c_1 \neq c_2$).
  For each node $x$ in $X$, we know that $x$ does not belong to $P_1$, since $c_1$ and $c_2$ agree on $m_1'$.
  Consequently, by virtue of property (T6) of Definition~\ref{def:well-formed-tfg}, each $x$ in $X$ admits an output $x \agg y$.
  More generally, for each $x$ in $X$, we have $x \agg Y$ for some non-empty set $Y$.
  We consider an element $x_0$ of $X$ such that $x_0 \agg Y_0$ holds with $Y_0$ disjoint from $X$ (such an element $x_0$ necessarily exists if $X$ is not empty,
  otherwise $X$ would contain a cycle of $\circ$-arcs, which is forbidden by the acyclic property (T5) of the well-formed TFG).
  Then, since $c_1$ is well-defined, we know that $c_1(x_0) = \sum_{y \in Y_0} c_1(y)$ by property (CEq).
  However, we have $c_1(y) = c_2(y)$ for all $y \in Y_0$ since $Y_0$ is disjoint from $X$.
  Hence, $c_1(x_0) = \sum_{y \in Y_0} c_2(y) = c_2(x_0)$, which contradicts $x_0 \in X$.
  As a conclusion, $X$ must be empty, that is $c_1 = c_2$. There can be at most one well-defined configuration.

\end{proof}

Then, as a corollary of this lemma and of Theorem~\ref{th:configuration_reachability}, we 
get:
\begin{theorem}[Reachability decision]
  \label{th:reachability_decision}
  Assume $\tfg{E}$ is a well-formed TFG for the equivalence $(N_1, m_1)
  \vartriangleright_E (N_2, m_2)$. Deciding if a marking $m'_1$ is reachable in
  $R(N_1, m_1)$ amounts to construct a total, well-defined configuration
  $c$ such that $c \equiv m'_1$ and then check if $c_{\mid N_2}$ is reachable in $R(N_2, m_2)$. 
\end{theorem}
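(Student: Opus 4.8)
The plan is to obtain this statement as a direct corollary of Lemma~\ref{lemma:marking_reduction_unicity} (Unicity) and Theorem~\ref{th:configuration_reachability} (Configuration reachability); almost all of the technical content has already been established, so the task is mainly to assemble these two results correctly and to account for the degenerate case in which no configuration can be built at all.

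First I would record a simple observation that is used repeatedly: if $c$ is total and $c \equiv m'_1$, then since $m'_1$ is defined exactly on $P_1$ while $c$ is defined everywhere, the two agree on all of $P_1$, so $c_{\mid N_1} = m'_1$ as markings. This lets me freely replace $c_{\mid N_1}$ by $m'_1$ (and conversely) in what follows, which is what connects the abstract biconditional of Theorem~\ref{th:configuration_reachability} to the concrete target marking.

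Next I would prove the equivalence underlying the procedure, namely that $m'_1 \in R(N_1, m_1)$ if and only if there is a total, well-defined configuration $c$ with $c \equiv m'_1$ and $c_{\mid N_2} \in R(N_2, m_2)$. For the forward direction, assume $m'_1$ is reachable. The existence part of Theorem~\ref{th:configuration_reachability} yields a total, well-defined $c$ with $c \equiv m'_1$, and Lemma~\ref{lemma:marking_reduction_unicity} guarantees that this $c$ is the unique such configuration, hence exactly the object the algorithm constructs. Using $c_{\mid N_1} = m'_1 \in R(N_1,m_1)$, the biconditional (converse) part of Theorem~\ref{th:configuration_reachability} gives $c_{\mid N_2} \in R(N_2, m_2)$. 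For the backward direction, given such a $c$ with $c_{\mid N_2}$ reachable, the same biconditional gives $c_{\mid N_1} \in R(N_1, m_1)$, and $c_{\mid N_1} = m'_1$ concludes.

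Finally I would dispose of the edge case: if no total, well-defined configuration $c$ with $c \equiv m'_1$ exists, then by the contrapositive of the existence part of Theorem~\ref{th:configuration_reachability}, $m'_1 \notin R(N_1, m_1)$, so the procedure is correct in reporting unreachability whenever the construction fails. I do not expect a genuine technical obstacle, since the real work was carried out in the two cited results. The only point that requires care is the precise role of Unicity: it is not needed for soundness of the biconditional, which holds for any witnessing $c$, but it is what makes the ``construct $c$'' step deterministic and the subsequent check on $N_2$ unambiguous, so that the two-step procedure is a bona fide decision algorithm rather than a nondeterministic search.
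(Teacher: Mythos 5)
Your proposal is correct and follows exactly the route the paper intends: the paper states this theorem as an immediate corollary of Lemma~\ref{lemma:marking_reduction_unicity} and Theorem~\ref{th:configuration_reachability}, and your write-up simply spells out that derivation (including the correct observation that $c \equiv m'_1$ with $c$ total gives $c_{\mid N_1} = m'_1$, the contrapositive handling of the case where no configuration exists, and the role of unicity in making the procedure deterministic). No gaps.
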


Hence, given an equivalence $(N_1, m_1) \vartriangleright_E (N_2, m_2)$
and the associated TFG $\tfg{E}$, we first extend our marking of interest
$m_1'$ into a total well-defined configuration $c$, as done by Algorithm~\ref{alg:reachable}, next.
(Lemma~\ref{lemma:marking_reduction_unicity} ensures that if such configuration exists, then it is unique.)
As stated in Theorem~\ref{th:reachability_decision}, if $c$ restricted to $N_2$ is a marking reachable in $(N_2, m_2)$,
then $m_1'$ is reachable in $(N_1, m_1)$.
Otherwise, $\neg \maseq{m_1'}$ is an invariant on $R(N_1, m_1)$.

We illustrate this algorithm by taking two concrete examples on the marked net
$M_1$ given in Fig.~\ref{fig:stahl}. Assume we want to decide if marking $m_1'
\defeq ({p_0} = {0} \comma {p_1} = {2} \comma {p_2} = {0} \comma {p_3} = {1}
\comma {p_4} = {1} \comma {p_5} = {1} \comma{p_6} = {0})$ is reachable in $(N_1,
m_1)$, for $m_1$ as depicted in Fig.~\ref{fig:stahl}. This marking can be
extended into a total, well-defined configuration $c$, with $c(a_1) = c(a_2) =
2$. And so, deciding of the reachability of marking $m_1'$ in $(N_1, m_1)$ is
equivalent to decide if marking $m_2' \defeq ({p_0} = {0} \comma {a_2} = {2}
\comma{p_6} = {0})$ is reachable in $(N_2, m_2')$ (which it is not).
Observe that $m_1'$ would be reachable if the initial marking $m_1$ was
$({p_0} = 2, {p_6} = 1)$ and the other places empty.
Conversely, assume our
marking of interest is $m_1''$ such that $m_1''(p_4) = 2$ and $m_1''(p_1) =
m_1''(p_2) = 0$. It is not possible to extend this marking into a well-defined
configuration $c$, since $c(a_1) = m_1''(p_1) + m_1''(p_2) = 0$ and $c(a_1) =
c(a_2) > m_1''(p_4)$. In this case, we directly obtain that $m_1''$ is not
reachable in $(N_1, m_1)$, for every initial marking $m_1$.

The \texttt{Reachable} function (see Algorithm~\ref{alg:reachable}) is a direct
implementation of Theorem~\ref{th:reachability_decision}: it builds a total
configuration $c$, then checks that it is well-defined (we omit the function
\textit{well-defined}, which is obvious), and finally finds out if $c_{\mid
N_2}$ is reachable in $(N_2,m_2)$.
This algorithm relies on the recursive procedure \texttt{BottomUp}
(see Algorithm~\ref{alg:bottomup}) which extends the marking of interest $m_1'$ into
a total, well-defined configuration if there is one.

\begin{algorithm}[htbp]
  \caption{Reachable($m_1'$, $\tfg{E}$, $(N_2, m_2)$)}\label{alg:reachable}
  \begin{algorithmic}[1]
    \State \textbf{In:}
    \begin{tabular}[t]{l@{}c@{ }l}
      $m_1'$ & : & a marking of $N_1$ \\
      $(N_2, m_2)$ & : & reduced net such that \\
       & & $(N_1, m_1) \vartriangleright_E (N_2, m_2)$ holds \\
      $\tfg{E}$ & : & well-formed TFG for \\
      & & the $E$-equivalence above
    \end{tabular}

    \vskip 1ex
    
    \noindent
    \textbf{Out:} a boolean indicating if $m_1' \in R(N_1, m_1)$
    \vskip 2ex

    \State $c \gets \vec{\bot}$  
    \hfil \acomment{;; c is a configuration for $\tfg{E}$}

    \vskip 1ex

    \State \textbf{for all} $p \in P_1$ \textbf{do} $c[p] \gets m_1[p]$ \label{line:reach-init1}
    \State \textbf{for all} $v^n \in K$ \textbf{do} $c[v] \gets n$ \label{line:reach-init2}

    \vskip 1ex
    
    \State \acomment{;; $\tfg{E}$ is $(V,R,A)$, as in Definition~\ref{def:tfg}}
    \State \textbf{for all} $v \in V$ \textbf{do} BottomUp$(c, v, \tfg{E})$ \label{line:reach-bup}

    \vskip 1ex

    \State \textbf{return} well-defined$(c) \land c_{\mid N_2} \in R(N_2, m_2)$ \label{line:reach-return}
  \end{algorithmic}
\end{algorithm}

\begin{algorithm}[htbp]
  \caption{BottomUp($c, v, \tfg{E}$)}\label{alg:bottomup}
  \begin{algorithmic}[1]
    \State
    \begin{tabular}[t]{l@{ }l}
      \textbf{In:} & $\tfg{E}$ : the TFG structure \\
      & $v$, a node in $\tfg{E}$\\
      \textbf{In out:} & $c$, a partial configuration of $\tfg{E}$ \\
      \textbf{Post:} & $c$ is defined for all nodes of $\succs{v}$
    \end{tabular}

    \vskip 1ex
    
    \For{\textbf{all} $v'$ \textbf{such that} $v \rightarrow v'$}
    \State BottomUp$(c, v', \tfg{E})$ \label{line:bup-rec}
    \EndFor

    \vskip 1ex
    \State \textbf{if} {$v \agg X$} \textbf{then} $c[v] \gets \sum_{v' \in X} c[v']$ \label{line:bup-update}
  \end{algorithmic}
\end{algorithm}

We note that the second algorithm, which is recursive, always terminates since
it simply follows the TFG structure. We still have to prove that
Algorithm~\ref{alg:reachable} always returns the correct answer.
\begin{proof}
We consider two cases:
\begin{itemize}
\item[] \textit{Case C1}: the algorithm returns false because $c$ is not
  well-defined (line~\ref{line:reach-return}). In this case we show, next, that
  no well-defined configuration $c$ extending $m'_1$ exists, and thus $m'_1$ is
  not reachable by Theorem~\ref{th:configuration_reachability}.
  
\item[] \textit{Case C2}: the algorithm returns the value of $c_{\mid N_2} \in
R(N_2, m_2)$. In this case, thanks to Theorem~\ref{th:reachability_decision}, it
suffices to show that $c$ is total, well-defined, and extends $m'_1$.
\end{itemize}
We start by case C2, which basically states the soundness of the algorithm. Let
us show that $c$ is total: for every node~$v$, if $v$ is a constant, or if it
belongs to $P_1$, it is set by lines~\ref{line:reach-init1}
and~\ref{line:reach-init2} of Algorithm~\ref{alg:reachable}. Otherwise, $v$ is
not a $\circ$-leaf (by property (T6) of Definition~\ref{def:well-formed-tfg}), hence
it is set by line~\ref{line:bup-update} of Algorithm~\ref{alg:bottomup}, which
is invoked on every node of the TFG (line~\ref{line:reach-bup},
Algorithm~\ref{alg:reachable}).
Additionally, $c$ is well-defined, because it passed the test
line~\ref{line:reach-return}. It also extends $m'_1$ by consequence of
line~\ref{line:reach-init1} (those values are not overwritten later in
Algorithm~\ref{alg:bottomup}, because of property (T6)).

Case C1 states the completeness of the algorithm. By contraposition, we assume
there exists a total well-defined configuration $c'$ extending $m'_1$, and show
by induction on the recursive calls to \texttt{BottomUp}, that the algorithm builds a
configuration $c$ equal to $c'$. More precisely, we show that an invocation of
$BottomUp(c,v,\tfg{E})$ returns a configuration $c$ that coincides with $c'$ on
all nodes of $\succs{v}$. The initial configuration $c$ built by the algorithm
extends $m'_1$ and sets the constants (lines~\ref{line:reach-init1}
and~\ref{line:reach-init2} of Algorithm~\ref{alg:reachable}). Then, for any
invocation of $BottomUp(c,v,\tfg{E})$, for every node $w$ in $\succs{v}$, if $w$
is a node in $P_1$, then $c(v) = c'(v)$ holds immediately. Otherwise, $w$ is not
a $\circ$-leaf. If $w \neq v$, then $w$ is in $\succs{u}$ for some child $u$ of
$v$, and $c(v) = c'(v)$ holds by induction hypothesis on the recursive call to
$BottomUp(c,u,\tfg{E})$ that occurred on line~\ref{line:bup-rec}. If $w = v$,
then $c(v)$ is set by line~\ref{line:bup-update}, that is $c(v) = \sum_{v' \in
X} c(v')$. By induction hypothesis, we have $c(v') = c'(v')$ for all $v'$
children of $v$ (the recursive call occurred also on line~\ref{line:bup-rec}).
Hence, $c(v) = \sum_{v' \in X} c'(v') = c'(v)$ by property (Ceq) since $c'$ is
well-defined. Consequently, $c(w) = c'(w)$ for all $w$ in $\succs{v}$. As a
result, $c = c'$, since \texttt{BottomUp} is invoked on all nodes of the TFG.
\end{proof}

\paragraph{State space partition}

We can use the previous results to derive an interesting result about
the state space of equivalent Petri nets, in the case where the
associated TFG is well-formed. Indeed, we can prove that, in this
case, we can build a partition of the reachable markings of
$(N_1, m_1)$ that is in bijection with the reachable markings of
$(N_2, m_2)$.

Given a marking $m_2'$ of the reduced net $N_2$ we define $Inv(m_2')$ as the set
of  markings of the initial net $N_1$ which are in relation with $m'_2$. 

$$Inv(m_2') \defeq \{ c_{\mid N_1}
\mid c \text{ total, well-defined } \land c \equiv m_2'\}$$  

\begin{theorem}[State space partition]
  Assume $\tfg{E}$ is a well-formed TFG for the equivalence $(N_1, m_1)
  \vartriangleright_E (N_2, m_2)$. The family of sets $P \defeq \{ Inv(m_2')
  \mid m_2' \in R(N_2, m_2) \}$ is a partition of $R(N_1, m_1)$.
\end{theorem}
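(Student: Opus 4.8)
The plan is to show that the sets $Inv(m_2')$ partition $R(N_1, m_1)$, which requires three things: that they cover $R(N_1, m_1)$, that distinct $m_2'$ give disjoint sets, and (implicitly) that each $Inv(m_2')$ is nonempty when $m_2' \in R(N_2, m_2)$. The key tools are Theorem~\ref{th:configuration_reachability} (reachable markings extend to total, well-defined configurations, and such configurations restrict correctly on both sides) and Lemma~\ref{lemma:marking_reduction_unicity} (a marking $m_1'$ of $N_1$ extends to \emph{at most one} total, well-defined configuration).

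First I would establish \textbf{coverage}. Take any $m_1' \in R(N_1, m_1)$. By the first part of Theorem~\ref{th:configuration_reachability}, there is a total, well-defined configuration $c$ with $c \equiv m_1'$, so $c_{\mid N_1} = m_1'$. By the converse part of the same theorem, $c_{\mid N_2} \in R(N_2, m_2)$; call it $m_2'$. Then $c$ witnesses $m_1' \in Inv(m_2')$, so every reachable marking of $N_1$ lies in some member of the family. Conversely, every element of any $Inv(m_2')$ with $m_2' \in R(N_2, m_2)$ is indeed in $R(N_1, m_1)$: if $c$ is total, well-defined with $c \equiv m_2'$, then $c_{\mid N_2} = m_2' \in R(N_2, m_2)$, and the converse direction of Theorem~\ref{th:configuration_reachability} gives $c_{\mid N_1} \in R(N_1, m_1)$. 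This shows the union of the family equals $R(N_1, m_1)$.

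Next I would prove \textbf{disjointness}, which I expect to be the main obstacle since it is where uniqueness is essential. Suppose $m_1' \in Inv(m_2') \cap Inv(m_2'')$ for reachable $m_2', m_2''$ of $N_2$. Then there exist total, well-defined configurations $c', c''$ with $c'_{\mid N_1} = c''_{\mid N_1} = m_1'$ and $c' \equiv m_2'$, $c'' \equiv m_2''$. Both $c'$ and $c''$ are total, well-defined configurations extending the same marking $m_1'$ (i.e. $c' \equiv m_1'$ and $c'' \equiv m_1'$, since they agree with $m_1'$ on $P_1$). By Lemma~\ref{lemma:marking_reduction_unicity}, such an extension is unique, so $c' = c''$. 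Restricting to $N_2$ gives $m_2' = c'_{\mid N_2} = c''_{\mid N_2} = m_2''$. Hence the sets are either equal or disjoint.

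Finally, to confirm this is genuinely a partition I would note \textbf{nonemptiness}: for $m_2' \in R(N_2, m_2)$, the first part of Theorem~\ref{th:configuration_reachability} (applied on the $N_2$ side) yields a total, well-defined configuration $c$ with $c \equiv m_2'$, so $c_{\mid N_1} \in Inv(m_2')$, and this set is nonempty. The bijection claim then follows immediately: the map $m_2' \mapsto Inv(m_2')$ is a bijection from $R(N_2, m_2)$ onto the blocks of the partition, since coverage plus disjointness shows it is well-defined and the disjointness argument shows $Inv(m_2') = Inv(m_2'')$ forces $m_2' = m_2''$ (injectivity), while surjectivity onto the nonempty blocks is immediate by construction. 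The only subtlety to watch is keeping the two directions of Theorem~\ref{th:configuration_reachability} straight — using the existence direction for coverage and nonemptiness, and the reachability-equivalence direction to move between the $N_1$ and $N_2$ sides.
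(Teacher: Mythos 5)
Your proof is correct and follows essentially the same route as the paper: nonemptiness and coverage via Theorem~\ref{th:configuration_reachability}, and pairwise disjointness via the uniqueness of the extending configuration from Lemma~\ref{lemma:marking_reduction_unicity}. You are in fact slightly more explicit than the paper on one point worth keeping --- namely that each $Inv(m_2')$ is itself contained in $R(N_1,m_1)$, via the converse direction of Theorem~\ref{th:configuration_reachability} --- which the paper's coverage step leaves implicit.
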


\begin{proof}
  The set $P$ is partition as a consequence of the following points:

  \noindent\textbf{No empty set in P}:
  for any $m_2'$ in $R(N_2, m_2)$,
  by Theorem~\ref{th:configuration_reachability},
  there exists a total, well-defined configuration $c$ such that $c \equiv m$.
  Thus, $Inv(m_2')$ is not empty. This implies $\emptyset \notin P$.

  \noindent\textbf{The union $\union_{A \in P} A$ covers $R(N_1,m_1)$}:
  take $m_1'$ in $R(N_1, m_1)$. From
  Theorem~\ref{th:configuration_reachability} there exists a total, well-defined
  configuration $c$ such that $c \equiv m$ and $c_{\mid N_2} \in R(N_2, m_2)$.
  Hence, the sets in $P$ cover $R(N_1, m_1)$.

  \noindent\textbf{Pairwise disjoint}: take two different markings $m_2'$ and
  $m_2''$ in $R(N_2, m_2)$. From Lemma~\ref{lemma:marking_reduction_unicity} we
  have $Inv(m_2') \cap Inv(m_2'') = \emptyset$ since every marking of $(N_1,
  m_1)$ can be extended into at most one possible configuration $c$.
\end{proof}

As a corollary, when a marked net $(N_1, m_1)$ can be partially reduced,
we know how to partition its state space into a union of disjoint
\textit{convex sets}; meaning sets of markings defined as solutions to
a system of linear equations.


\section{Concurrency relation}
\label{sec:algorithm}

In this section, we present an acceleration algorithm to compute the concurrency
relation of a Petri net using TFGs. The philosophy is quite different from the
previous algorithm for reachability decision. Here, given an equivalence
statement $(N_1, m_1) \vartriangleright_E (N_2, m_2)$ the idea is to compute the
concurrency relation directly on the reduced net $(N_2, m_2)$, and track the
information back to the initial net $(N_1, m_1)$ using the corresponding TFG
$\tfg{E}$. 

In the following, we will focus on safe nets. Fortunately, our reduction rules
preserve safeness (see Corollary~\ref{corollary:safeness_preservation}). Hence,
we do not need to check if $(N_2, m_2)$ is safe when $(N_1, m_1)$ is.
The fact that the nets are safe has consequences on configurations.  

\begin{lemma}[Safe configurations]\label{lemma:safe_configurations} Assume
  $\tfg{E}$ is a well-formed TFG for $(N_1, m_1) \reduc_E (N_2, m_2)$ with
  $(N_1, m_1)$ a safe Petri net. Then for every total,
  well-defined configuration $c$ of $\tfg{E}$ such that $c_{\mid N_1}$ reachable
  in $(N_1,m_1)$, and every node $v$ (not in $K$), we have $c(v) \in \{0, 1\}$.
\end{lemma}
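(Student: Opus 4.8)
The plan is to prove $c(v) \leqslant 1$ for every node $v \notin K$; since configurations take values in $\Nat$, this already gives $c(v) \in \{0,1\}$. I would first dispatch the easy case: if $v$ is a $\circ$-leaf, then by property (T6) of well-formedness $v \in P_1$, so $c(v) = c_{\mid N_1}(v)$, and since $c_{\mid N_1} \in R(N_1,m_1)$ and $(N_1,m_1)$ is safe we get $c(v) \leqslant 1$ immediately. By (T6) the remaining nodes $v \notin K$ are exactly those that are \emph{not} $\circ$-leaves, i.e.\ the ones admitting an outgoing agglomeration arc $v \agg X$, and these are the cases where a genuine argument is needed.

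For such a node I would argue by contradiction, assuming $c(v) \geqslant 2$. First, following outgoing arcs from $v$ in the finite DAG $\tfg{E}$ (property (T5)) one reaches a sink $\ell$; a sink has no outgoing agglomeration arc, hence is a $\circ$-leaf, and being the endpoint of a nontrivial path it is not isolated, so by (T2) it is not a constant and therefore $\ell \in P_1$ by (T6). Thus $v \rightarrow^\star \ell$ with $\ell \in P_1$. Applying the (Forward) part of Lemma~\ref{lemma:forward_propagation} to $p = v$ and $q = \ell$ yields a well-defined configuration $c'$ with $c'(\ell) \geqslant c'(v) = c(v) \geqslant 2$ and $c'(w) = c(w)$ for every node $w$ not in $\succs{v}$; note $c'$ is total since $c$ is and $c'$ agrees with $c$ off $\succs{v}$ while being defined on $\succs{v}$. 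The crucial observation is that $c'$ still coincides with $c$ on all of $P_2$: every root other than $v$ lies outside $\succs{v}$ (a nontrivial path into a root is impossible in an acyclic graph), and if $v$ is itself a root then (Forward) keeps $c'(v) = c(v)$; hence $c'_{\mid N_2} = c_{\mid N_2}$.

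To finish I would invoke Theorem~\ref{th:configuration_reachability}: from $c_{\mid N_1} \in R(N_1,m_1)$ that theorem gives $c_{\mid N_2} \in R(N_2,m_2)$, whence $c'_{\mid N_2} \in R(N_2,m_2)$, and the same theorem applied to the total, well-defined $c'$ yields $c'_{\mid N_1} \in R(N_1,m_1)$. But $\ell \in P_1$ with $c'(\ell) \geqslant 2$ contradicts the safeness of $(N_1,m_1)$, so $c(v) \leqslant 1$. I expect the main obstacle to be the bookkeeping around the (Forward) step: one must guarantee that the witness $c'$ leaves the marking of the reduced net untouched, which is precisely why I select a \emph{sink} below $v$ (so that all modified nodes lie in $\succs{v}$) and why I rely on the roots being disjoint from $\succs{v}$. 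It is worth remarking that this route uses only the safeness of $N_1$ together with Theorem~\ref{th:configuration_reachability}, and so does not actually require Corollary~\ref{corollary:safeness_preservation}.
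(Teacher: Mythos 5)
Your proposal is correct and follows essentially the same route as the paper's proof: argue by contradiction (or equivalently bound $c(v)$ by $1$), use (T6) to exhibit a place $p \in P_1$ with $v \rightarrow^\star p$, push the excess token down with the (Forward) property of Lemma~\ref{lemma:forward_propagation}, observe that $c'_{\mid N_2} = c_{\mid N_2}$ so Theorem~\ref{th:configuration_reachability} makes $c'_{\mid N_1}$ reachable, and contradict safeness of $(N_1,m_1)$. Your extra bookkeeping (descending to a sink, checking that roots cannot lie in $\succs{v}\setminus\{v\}$) just makes explicit steps the paper leaves implicit.
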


\begin{proof}
  We prove the result by contradiction. Take a total, well-defined configuration
  $c$ such that $c_{\mid N_1}$ is reachable in $(N_1, m_1)$ and a node $v$ such
  that $c(v) > 1$. Since $(N_1,m_1)$ is safe, $v$ does not belong to $P_1$. From
  property (T6) of Definition~\ref{def:well-formed-tfg} we have that $v$ is not
  a $\circ$-leaf, thus $v$ has some output arcs such that $v \agg X$.
  Additionally, property (T6) also entails that there exists a place $p$ of $N_1$
  such that $v \to^\star p$. By Lemma~\ref{lemma:forward_propagation} (Forward),
  we can find a well-defined configuration $c'$ of $\tfg{E}$ such that $c'(p)
  \geqslant c'(v) = c(v) > 1$ and $c'(w) = c(w)$ for every node $w$ not in
  $\succs{v}$.
  The latter implies $c'_{\mid N_2} = c_{\mid N_2}$. Then, by
  Theorem~\ref{th:configuration_reachability}, $c'_{\mid N_1}$ is reachable in
  $(N_1,m_1)$. However,  $c'(p) > 1$ is in contradiction with the safeness of
  $(N_1,m_1)$.
\end{proof}

\begin{corollary}[Safeness preservation]
  \label{corollary:safeness_preservation}
  Assume $\tfg{E}$ is a well-formed TFG for $(N_1, m_1) \reduc_E (N_2, m_2)$.
  If $(N_1, m_1)$ is safe then $(N_2, m_2)$ is safe.
\end{corollary}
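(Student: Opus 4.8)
The plan is to prove the contrapositive is not needed; instead I would derive safeness of $(N_2, m_2)$ directly from the just-established Lemma~\ref{lemma:safe_configurations}. First I would take an arbitrary reachable marking $m_2'$ of $(N_2, m_2)$ and an arbitrary place $p$ of $N_2$, with the goal of showing $m_2'(p) \leqslant 1$. By Theorem~\ref{th:configuration_reachability} (the first, forward direction), since $m_2' \in R(N_2, m_2)$, there exists a total, well-defined configuration $c$ of $\tfg{E}$ such that $c \equiv m_2'$; in particular $c(p) = m_2'(p)$ because $p$ is a place of $N_2$ and $c$ agrees with $m_2'$ on shared nodes.

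The key observation is that this same configuration $c$ also satisfies $c_{\mid N_2} = m_2' \in R(N_2, m_2)$, so by the converse (second) part of Theorem~\ref{th:configuration_reachability}, the restricted marking $c_{\mid N_1}$ is reachable in $(N_1, m_1)$. Now $c$ is a total, well-defined configuration whose $N_1$-restriction is reachable, which is exactly the hypothesis required by Lemma~\ref{lemma:safe_configurations}. Applying that lemma to the node $p$ (which is not in $K$, being a genuine place of $N_2$) yields $c(p) \in \{0,1\}$. Since $c(p) = m_2'(p)$, we conclude $m_2'(p) \leqslant 1$. As $m_2'$ and $p$ were arbitrary, every reachable marking of $N_2$ is $1$-bounded, i.e. $(N_2, m_2)$ is safe.

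The one point that requires a little care, rather than being a genuine obstacle, is checking that the configuration $c$ produced by the forward direction of Theorem~\ref{th:configuration_reachability} simultaneously meets all the hypotheses of Lemma~\ref{lemma:safe_configurations}: totality and well-definedness are guaranteed directly by the theorem, and reachability of $c_{\mid N_1}$ follows by feeding $c$ back through the converse direction of the same theorem. Thus the whole argument is essentially a two-line chaining of the two halves of Theorem~\ref{th:configuration_reachability} with Lemma~\ref{lemma:safe_configurations}, and there is no substantive combinatorial work remaining. I expect the proof to be short, with the only subtlety being the correct bookkeeping on which nodes $c$ and $m_2'$ share so that $c(p) = m_2'(p)$ is justified.
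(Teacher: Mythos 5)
Your argument is correct and is exactly the route the paper intends: the statement is presented as an immediate corollary of Lemma~\ref{lemma:safe_configurations}, obtained by lifting a reachable marking of $N_2$ to a total well-defined configuration via the first part of Theorem~\ref{th:configuration_reachability}, transferring reachability to $c_{\mid N_1}$ via the second part, and then reading off $c(p)\in\{0,1\}$. The bookkeeping point you flag ($c(p)=m_2'(p)$ because $p\in P_2$ and $c$ is total) is handled correctly.
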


We base our approach on the fact that we can extend the notion of concurrent
places (in a marked net), to the notion of concurrent nodes in a TFG, meaning
nodes that can be marked together in a reachable configuration (as defined in Definition~\ref{def:reachable-config}).

By Theorem~\ref{th:configuration_reachability}, if we take reachable markings in
$N_2$---meaning we fix the values of roots in $\tfg{E}$---we can find places of
$N_1$ that are marked together by propagating tokens from the roots to the
leaves (Lemma~\ref{lemma:forward_propagation}). In our algorithm, next, we show
that we can compute the concurrency relation of $N_1$ by considering two
cases: (1) we start with a token in a single root $p$, with $p$ nondead, and
propagate this token forward until we find a configuration with two places in
$N_1$ marked together (which is basically due to some redundant places);
or (2) we do the same but placing a token in two separate
roots, $p_1, p_2$, such that $p_1 \conc p_2$.

\subsection{Description of the algorithm}

We assume that $\tfg{E}$ is a well-formed TFG for the relation $(N_1, m_1)
\vartriangleright_E (N_2, m_2)$. We use symbol $\conc_2$ for the concurrency
relation on $(N_2, m_2)$ and $\conc_1$ on $(N_1, m_1)$.  The set of nodes of $\tfg{E}$ is $P$.

We define an algorithm that takes as inputs a well-formed TFG $\tfg{E}$ plus the
concurrency relation $\conc_2$ on the net $(N_2, m_2)$, and outputs the
concurrency relation $\conc_1$ on $(N_1, m_1)$. Actually, our algorithm
computes a \emph{concurrency matrix}, $\Conc$, that is a symmetric matrix
such that
$\Conc[v,w] = 1$ when the nodes $v, w$ can be marked together in a reachable
configuration, and $0$ otherwise. We prove (Theorem~\ref{th:matrix}) that the
relation induced by $\Conc$ matches with $\conc_1$ on $N_1$.
Our algorithm can be pragmatically interrupted after a given time limit,
it then returns a partial relation $\conc_2$. Undefined cases
are written $\Conc[v,w] = \bullet$ in matrix $\Conc$, which is then qualified as
\textit{incomplete}.

The complexity of computing the concurrency relation is highly dependent on the
number of places in the net. For this reason, we say that our algorithm performs
some sort of ``dimensionality reduction'', because it allows us to solve a
problem in a high-dimension space (the number of places in $N_1$) by solving it
first on a lower dimension space (since $N_2$ may have far fewer places) and
then transporting back the result to the original net.
In practice, we compute the concurrency relation on $(N_2, m_2)$ using the tool
\caesar\ from the CADP toolbox; but we can rely on any kind of ``oracle'' to
compute this relation for us. This step is not necessary when the initial net is
fully reducible, in which case the concurrency relation for $N_2$ is trivial and
all the roots in $\tfg{E}$ are constants.

To simplify our notations, we assume that $v \conc_2 w$ when $v$ is a constant
node in $K(1)$ and $w$ is nondead. On the opposite, $v \mathbin{\#}_2 w$ when
$v \in K(0)$ or $w$ is dead. 

Our algorithm is divided into two main functions, \ref{alg:change_of_dimension}
and~\ref{alg:token_propagation}.
It also implicitly relies on an auxiliary function that returns the
successors $\succs{x}$ for a given node $x$ (we omit the details).
In the main function, \texttt{Matrix}, we
iterate over the nondead roots of $\tfg{E}$ and recursively propagates
the information that node $v$ is nondead: the call to \texttt{Propagate} in
line~\ref{line:alive} updates the
concurrency matrix $\Conc$ by finding all the concurrent nodes
that arise from a unique root $v$. We can prove all such cases arise from redundancy arcs with
their origin in $\succs{v}$.
More precisely, we prove in
Lemma~\ref{lemma:concurrency_redundancy_nodes} that if
$v \red w$ holds, then the nodes in the set
$\succs{v} \setminus \succs{w}$ are concurrent to all the nodes in $\succs{w}$.
This is made explicit in the \texttt{for} loop, line~\ref{line:propag-pairs}
of Algorithm~\ref{alg:token_propagation}.
Next, in the
second \texttt{for} loop of \texttt{Matrix}, we compute the concurrent nodes
that arise from two distinct nondead roots $(v, w)$. In this case, we can prove
that all the successors of $v$ are concurrent with successors of $w$: all the
pairs in $\succs{v} \times \succs{w}$ are concurrent.

\begin{algorithm}
  \caption{Matrix($\tfg{E}$, $\conc_2$)}\label{alg:change_of_dimension}
  \begin{algorithmic}[1]
    \State
    \begin{tabular}[t]{l@{ }l}
      \textbf{In:} & $\tfg{E}$ : the TFG structure \\
      & $\conc_2$: concurrency relation on $(N_2, m_2)$ \\
      \textbf{Out:} &  the concurrency matrix $\Conc$.
    \end{tabular}

    \vskip 1ex
    \State $\Conc \gets \vec{0}$ \label{line:init0} \hfil \acomment{;; C is a matrix indexed by $P \times P$}

    \vskip 1ex
    \State \acomment{;; $v$ ($\in P_2$) is nondead iff $v \conc_2 v$ holds}
    \For{\textbf{all} $v$ nondead root node in $\tfg{E}$} \label{line:for-alive}
    \State $\text{Propagate}(\tfg{E}, \Conc, v)$\label{line:alive}
    \EndFor
    \vskip 1ex

    \State \acomment{;; $v$ and $w$ ($\in P_2$) are concurrent iff $v \conc_2 w$ holds}
    \For{\textbf{all} $(v,w)$ distinct concurrent roots in $\tfg{E}$}
    \For{\textbf{all}{$(v', w') \in \succs{v} \times \succs{w}$}}
    \State $\Conc[v', w'] \gets 1$ \label{line:set-distinct-1}
    \State $\Conc[w', v'] \gets 1$ \label{line:set-distinct-2}
    \EndFor
    \EndFor
    \vskip 1ex
    \State \textbf{return} $\Conc$
  \end{algorithmic}
\end{algorithm}

\begin{algorithm}
  \caption{Propagate($\tfg{E}$, $C$, $v$)}
  \label{alg:token_propagation}
  \begin{algorithmic}[1]
    \State
    \begin{tabular}[t]{l@{ }l}
      \textbf{In:} & $\tfg{E}$ : the TFG structure \\
      & $v$: node \\
      \textbf{In out:} & the concurrency matrix $\Conc$.\\
      \textbf{Post:} & $\Conc$ contains all the concurrency \\
      & relations induced by knowing that \\
      & $v$ is nondead.
    \end{tabular}

    \vskip 1ex

    \State \acomment{;; This loop includes $\Conc[v,v] \gets 1$}    
    \For{\textbf{all} $w \in \succs{v}$}
    \State $\Conc[v,w] \gets 1$\label{line:non_dead1}
    \State $\Conc[w,v] \gets 1$\label{line:non_dead2}
    \EndFor
    \vskip 1ex
    
    \For{\textbf{all} $w$ such that $v \rightarrow w$} \label{line:for-propagate}
    \State $\text{Propagate}(\tfg{E}, \Conc, w)$ \label{line:propagate-recursion}
    \EndFor

    \vskip 1ex
    \For{\textbf{all} $w$ such that $v \red w$}
    \For{$(v', w') \in ((\succs{v} \setminus \succs{w}) \times \succs{w})$} \label{line:propag-pairs}
    \State $\Conc[v',w'] \gets 1$\label{line:redundancy_product_1}
    \State $\Conc[w',v'] \gets 1$\label{line:redundancy_product_2}
    \EndFor
    \EndFor  
  \end{algorithmic}
\end{algorithm}

We can perform a cursory analysis of the complexity of our algorithm.
We update the matrix by recursively invoking \texttt{Propagate}, along the edges of $\tfg{E}$, starting
from the roots.
(Of course, an immediate optimization consists in marking the visited nodes, so that
the function \texttt{Propagate} is never invoked twice on the same node. We do not
provide the details of this optimization, since it has no impact on soundness, completeness, or theoretical complexity.)
More precisely, we call \texttt{Propagate} only on the nodes that are nondead in
$\tfg{E}$.
Hence, our algorithm performs a number of function calls that is linear in the number of nondead
nodes.
During each call to \texttt{Propagate}, we may update at most $O(N^2)$ values in
$\Conc$, where $N$ is the number of nodes in $\tfg{E}$ (see the \texttt{for}
loop on line~\ref{line:propag-pairs}). As a result, the complexity of our
algorithm is in $O(N^3)$, given the concurrency relation $\conc_2$. This has to
be compared with the complexity of building then checking the state space of the
net, which is PSPACE. Thus, computing the concurrency relation $\conc_2$ of
$(N_2, m_2)$, with a lower dimension, and tracing it back to $(N_1, m_1)$ is
quite benefiting.

In practice, our algorithm is efficient and its execution time is often
negligible when compared to the other tasks involved when computing the
concurrency relation. We give some results on our performances in
Sect.~\ref{sec:experimental}.

\subsection{Proof of correctness}

The soundness and completeness proofs of the algorithm rely on the following definition:

\begin{definition}[Concurrent nodes]\label{def:bconc} The concurrency relation
  of $\tfg{E}$, denoted $\bconc$, is the relation between pairs of nodes in
  $\tfg{E}$ such that $v \bconc w$ holds if and only if there is a total, well-defined
  configuration $c$ where: (1) $c$ is reachable, meaning $c_{\mid N_2} \in
  R(N_2, m_2)$; and (2) $c(v) > 0$ and $c(w) > 0$.
\end{definition}

The concurrency relation $\bconc$ of $\tfg{E}$ is a generalization of both
the concurrency relation $\conc_1$ of $N_1$ and $\conc_2$ of $N_2$:
for any pair of places $(p,q) \in P_1^2$,
by Theorem~\ref{th:configuration_reachability}, we have $p \conc_1 q$ if and only if
$p \bconc q$. Similarly for $(p,q) \in P_2^2$: $p \conc_2 q$ if and only if $p \bconc q$.
We say in the latter case that $p, q$ are \emph{concurrent roots}.
As a result, $\bconc$ is symmetric
and $v \bconc v$ means that $v$ is nondead (that is, there is a valuation $c$
with $c(v) > 0$).
We can extend this notion to constants: we say that two roots $v_1, v_2$ are
concurrent when $v_1 \bconc v_2$ holds, and that root $v_1$ is nondead when
we have $v_1 \bconc v_1$. This includes cases where $v_1$ or $v_2$ are in $K(1)$ (they are
constants with value~$1$).

We prove some properties about the relation~$\bconc$ that are direct corollaries
of our token propagation properties.
For all the following results, we implicitly assume that $\tfg{E}$ is a
well-formed TFG for the relation $(N_1, m_1) \reduc_E (N_2, m_2)$, that both
marked nets are safe, and that $\bconc$ is the concurrency relation of $\tfg{E}$.

\subsubsection{Checking nondead nodes}

We start with a property (Lemma~\ref{lemma:non_dead_propagation}) stating that
the successors of a nondead node are also nondead.
Lemma~\ref{lemma:non_dead_places_completeness} provides a dual result, useful to
prove the completeness of our approach; it states that it is enough to explore
the nondead roots to find all the nondead nodes.

\begin{lemma}\label{lemma:non_dead_propagation} If $v
  \bconc v$ and $v \to^\star w$ then $w \bconc w$ and $v \bconc w$.
\end{lemma}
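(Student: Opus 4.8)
The plan is to produce a \emph{single} reachable configuration that marks $v$ and $w$ simultaneously; this one witness yields both $v \bconc w$ and $w \bconc w$ at once. The engine is the (Forward) clause of Lemma~\ref{lemma:forward_propagation}, which lets me push the token already sitting on $v$ down the path $v \to^\star w$ to reach $w$, while leaving everything outside $\succs{v}$ untouched.

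First I would unfold the hypothesis $v \bconc v$: by Definition~\ref{def:bconc} there is a total, well-defined configuration $c$ that is reachable (i.e.\ $c_{\mid N_2} \in R(N_2, m_2)$) with $c(v) > 0$. In particular $c(v) \neq \bot$, so $c$ together with the path $v \to^\star w$ feeds directly into (Forward), yielding a well-defined configuration $c'$ with $c'(w) \geqslant c'(v) = c(v) > 0$ and $c'(u) = c(u)$ for every node $u \notin \succs{v}$. Since $c$ is total and $c'$ agrees with $c$ off $\succs{v}$ while being defined on all of $\succs{v}$ by the construction in (Forward), the configuration $c'$ is again total. (When $v = w$ this is the trivial base case $c' = c$.)

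The crux, and the only genuinely non-routine step, is checking that $c'$ is still \emph{reachable}, i.e.\ $c'_{\mid N_2} = c_{\mid N_2}$. Here I would invoke property (T6), by which the places of $N_2$ are exactly the roots of $\tfg{E}$ (apart from constants), and then observe that no root $r$ with $r \neq v$ can lie in $\succs{v}$: the relation $v \to^\star r$ would make $r$ the target of an arc, contradicting that $r$ is a root. Hence every place $r \in P_2 \setminus \{v\}$ satisfies $r \notin \succs{v}$, so $c'(r) = c(r)$; and if $v$ itself is a root then $c'(v) = c(v)$ by the equality $c'(v) = c(v)$ above. Therefore $c'_{\mid N_2} = c_{\mid N_2} \in R(N_2, m_2)$, so $c'$ is reachable.

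Finally I would read off the conclusion: $c'$ is a total, well-defined, reachable configuration with $c'(v) > 0$ and $c'(w) > 0$. By Definition~\ref{def:bconc} this gives $v \bconc w$, and the same witness $c'$ (already satisfying $c'(w) > 0$) gives $w \bconc w$, which completes the argument.
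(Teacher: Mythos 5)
Your proof is correct and follows essentially the same route as the paper: unfold $v \bconc v$ into a reachable witness configuration and push the token to $w$ via the (Forward) clause of Lemma~\ref{lemma:forward_propagation}. In fact you spell out a detail the paper leaves implicit — that $c'$ remains reachable because the roots of $N_2$ (other than possibly $v$) cannot lie in $\succs{v}$, so $c'_{\mid N_2} = c_{\mid N_2}$ — which is a welcome addition rather than a divergence.
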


\begin{proof}
  Assume $v \bconc v$. This means that there is a total, well-defined
  configuration $c$ such that $c(v) > 0$ and $c_{\mid N_2} \in R(N_2, m_2)$. Now
  take a successor node of $v$, say $v \to^\star w$. By
  Lemma~\ref{lemma:forward_propagation}, we can find another reachable
  configuration $c'$ such that $c'(w) \geqslant c'(v) = c(v)$ and $c'(x) = c(x)$
  for all nodes $x$ not in $\succs{v}$. Therefore we have both $w \bconc w$ and
  $v \bconc w$.
\end{proof}

\begin{lemma}\label{lemma:non_dead_places_completeness} If $v \bconc v$ then there is
  a root $v_0$ such that $v_0 \bconc v_0$ and $v_0 \to^\star {v}$.
\end{lemma}

\begin{proof}
  Assume $v \bconc v$. Then there is a total, well-defined configuration $c$
  such that $c_{\mid N_2} \in R(N_2, m_2)$ and $c(v) > 0$. By the backward
  propagation property of Lemma~\ref{lemma:forward_propagation} we know that
  there is a root, say $v_0$, such that $c(v_0) \geq c(v)$ and $v_0 \to^\star
  v$. Hence $v_0$ is nondead in $\tfg{E}$.
\end{proof}

\subsubsection{Checking concurrent nodes}
\label{sec:check-conc-plac}

We can prove similar results for concurrent nodes instead of nondead ones. We
consider the two cases considered by function \texttt{Matrix}: when
concurrent nodes are obtained from two concurrent roots
(Lemma~\ref{lemma:concurrent_nodes_propagation}); or when they are obtained from
a single nondead root (Lemma~\ref{lemma:concurrency_redundancy_nodes}), because of
redundancy arcs. Finally, Lemma~\ref{lemma:concurrent_places_propagation}
provides the associated completeness result.

\begin{lemma}
  \label{lemma:concurrent_nodes_propagation} Assume $v,w$ are two nodes in
  $\tfg{E}$ such that $v \notin \succs{w}$ and $w \notin \succs{v}$. If $v
  \bconc w$ then $v' \bconc w'$ for all pairs of nodes $(v', w') \in \succs{v}
  \times \succs{w}$.
\end{lemma}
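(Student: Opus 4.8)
The plan is to exploit the hypothesis $v \bconc w$ to obtain a reachable, total, well-defined configuration witnessing concurrency of $v$ and $w$, and then push tokens forward simultaneously along two disjoint ``cones'' of successors. First I would unfold Definition~\ref{def:bconc}: from $v \bconc w$ there is a total, well-defined configuration $c$ with $c_{\mid N_2} \in R(N_2, m_2)$ and $c(v) > 0$, $c(w) > 0$. Fix arbitrary targets $v' \in \succs{v}$ and $w' \in \succs{w}$; the goal is to produce a single reachable configuration in which both $v'$ and $w'$ are marked.

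The core idea is to apply the (Forward) part of Lemma~\ref{lemma:forward_propagation} twice, once for the path $v \rightarrow^\star v'$ and once for $w \rightarrow^\star w'$, and to combine the two resulting configurations. Applying (Forward) to $c$ for $v \rightarrow^\star v'$ yields a well-defined configuration $c_1$ with $c_1(v') \geqslant c_1(v) = c(v) > 0$ and $c_1(x) = c(x)$ for every node $x \notin \succs{v}$. Crucially, the hypothesis $w \notin \succs{v}$ guarantees $c_1(w) = c(w) > 0$, so $c_1$ still marks $w$; moreover, since $c_{\mid N_2} = (c_1)_{\mid N_2}$ (the roots of $\tfg{E}$ are not in $\succs{v}$ unless $v$ is itself a root, and even then $v \in N_2$ is only affected in value if it lies in $\succs{v}$, which it does as $v \rightarrow^\star v$; I must check this does not disturb $R(N_2,m_2)$ — see below), the configuration $c_1$ remains reachable. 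I would then apply (Forward) a second time, starting from $c_1$, to the path $w \rightarrow^\star w'$, obtaining $c_2$ with $c_2(w') \geqslant c_2(w) = c_1(w) > 0$ and $c_2(x) = c_1(x)$ for all $x \notin \succs{w}$. Since $v' \notin \succs{w}$ — this is where I use $v \notin \succs{w}$ together with acyclicity (T5) to argue no successor of $v$ reachable here falls inside $\succs{w}$ — the value $c_2(v') = c_1(v') > 0$ is preserved. Thus $c_2$ is a total, well-defined, reachable configuration with $c_2(v') > 0$ and $c_2(w') > 0$, giving $v' \bconc w'$.

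The delicate point, and the main obstacle, is controlling the restriction to $N_2$ so that reachability is preserved across both applications of (Forward). The (Forward) lemma only promises $c'(x) = c(x)$ for $x \notin \succs{p}$; when $p$ (here $v$ or $w$) is itself a root in $N_2$, its own value may change, which could in principle break $(c')_{\mid N_2} \in R(N_2,m_2)$. I would address this by noting that (Forward) keeps $c'(p) = c(p)$ exactly (the token count at the source node is unchanged), and that every node whose value is altered lies strictly below $p$ in $\succs{p} \setminus \{p\}$; by property (T6) and the structure of the TFG, the only roots are the places of $N_2$, and a root other than $p$ can be in $\succs{p}$ only via a nonempty path, contradicting its being a root (roots are not targets of arcs). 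Hence no root value other than possibly $p$ itself changes, and $p$'s value is preserved, so $(c_1)_{\mid N_2} = c_{\mid N_2}$ and likewise $(c_2)_{\mid N_2} = (c_1)_{\mid N_2}$, keeping reachability intact throughout.

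In summary, the argument is a clean double application of forward propagation, with the two disjointness hypotheses $v \notin \succs{w}$ and $w \notin \succs{v}$ serving precisely to ensure that each propagation does not erase the token placed by the other. I expect the only real care to be needed in the bookkeeping of which node values are touched by (Forward) and in verifying that the $N_2$-restriction — hence reachability — is stable, which follows from the fact that roots are never proper successors of other nodes.
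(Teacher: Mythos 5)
Your overall strategy---extract a reachable witness configuration from $v \bconc w$ and apply (Forward) twice, using $w \notin \succs{v}$ to keep $w$ marked after the first propagation---is exactly the paper's approach, and your bookkeeping about the $N_2$-restriction (roots are never proper successors, so reachability is preserved) is correct, if more detailed than the paper bothers to be. But there is a genuine gap at the one step you yourself flag as delicate: you claim that $v' \notin \succs{w}$ follows from $v \notin \succs{w}$ ``together with acyclicity (T5).'' This inference is false in a DAG. Acyclicity does not prevent two incomparable nodes from sharing successors: if $\{v,w\} \red r$ is a redundancy equation, then $r \in \succs{v} \cap \succs{w}$ even though neither node is a successor of the other. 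So taking $v' = r$, your second application of (Forward) along $w \rightarrow^\star w'$ may perfectly well alter the value at $v'$, and your argument that $c_2(v') = c_1(v') > 0$ collapses.

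The paper explicitly notes that this step ``would be immediate in a tree structure, but requires extra proof in our DAG structure,'' and closes the gap differently: it shows that if $\succs{v} \cap \succs{w} \neq \emptyset$, then by well-formedness condition (T3) the two cones can only merge at a node $r$ with $X \red r$ having one parent $p \in \succs{v} \cap X$ and another $q \in \succs{w} \cap X$; propagating the tokens from $v$ and $w$ down to $p$ and $q$ forces $c''(r) > 1$ by (CEq), contradicting the standing safeness assumption (Lemma~\ref{lemma:safe_configurations}). In other words, the disjointness of the cones is not a structural fact about the TFG at all---it is a consequence of safeness, which your proof never invokes. Without that argument (or some substitute for it), the proposal does not establish the lemma.
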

  
\begin{proof}
  Assume $v \bconc w$, $v \notin \succs{w}$ and $w \notin \succs{v}$.
  By definition, there exists a total, well-defined configuration $c$ such that $c(v),
  c(w) > 0$ and $c_{\mid N_2} \in R(N_2, m_2)$.

  Take a successor $v'$ in $\succs{v}$, by applying the token propagation from
  Lemma~\ref{lemma:forward_propagation} we can construct a total, well-defined
  configuration $c'$ of $\tfg{E}$ such that $c'(v') \geqslant c'(v) = c(v)$ and
  $c'(x) = c(x)$ for any node $x$ not in $\succs{v}$. Hence, $c'(w) = c(w) > 0$.

  We can use the token propagation property again, on $c'$. This gives a total,
  well-defined configuration $c''$ such that $c''(w') \geqslant c''(w) = c'(w) =
  c(w)$ and $c''(x) = c'(x)$ for any node $x$ not in $\succs{w}$.

  We still have to prove $v' \notin \succs{w}$ (which would be immediate in a tree structure, but
  requires extra proof in our DAG structure).
  Then, we will be able to conclude by observing that it implies $c''(v') = c'(v') \geqslant c(v)$
  and therefore $v'\bconc w'$ as needed.

  We prove $v' \notin \succs{w}$ by contradiction.
  Indeed, assume $v' \in \succs{w}$. Hence, $\succs{v} \cap
  \succs{w} \neq \emptyset$.  Moreover, since $E$ is a well-formed TFG, there
  must exist (condition (T3)) three nodes $p,q,r$ such that $X \red r$, $p \in
  \succs{v} \cap X$ and $q \in \succs{w} \cap X$. Like in the
  proof of Lemma~\ref{lemma:forward_propagation} we can propagate the tokens
  contained in $v,w$ to $p,q$, and obtain $c''(r) > 1$ from (CEq), which
  contradicts our assumption that the nets are safe.

\end{proof}

\begin{lemma}
  \label{lemma:concurrency_redundancy_nodes} If $v \bconc v$ and $v \red w$ then
  $v' \bconc w'$ for every pair of nodes $(v', w')$ such that $v'\in (\succs(v)
  \setminus \succs{w})$ and $w' \in \succs{w}$.
\end{lemma}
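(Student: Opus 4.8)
The plan is to produce, for each admissible pair $(v',w')$, a single reachable, total, well-defined configuration in which both nodes carry a token, which by Definition~\ref{def:bconc} is exactly the assertion $v' \bconc w'$. First I would record two structural facts. Since $v \red w$ is an arc, $w \in \succs{v}$; and by acyclicity (T5) we cannot simultaneously have $v \in \succs{w}$, so $v \notin \succs{w}$. Next I would observe that $w$ is already marked whenever $v$ is: starting from the reachable, total, well-defined configuration $c$ witnessing $v \bconc v$ (so $c(v) > 0$), the arc $v \red w$ means $v$ belongs to the set $X$ with $X \red w$, and (CEq) gives $c(w) = \sum_{x \in X} c(x) \geqslant c(v) > 0$. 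Hence $v \bconc w$ already holds in $c$.

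The difficulty, compared with Lemma~\ref{lemma:concurrent_nodes_propagation}, is that $w \in \succs{v}$, so $v$ and $w$ are comparable and that lemma does not apply directly; this is also precisely why the statement restricts the claim to $v' \in \succs{v} \setminus \succs{w}$ rather than all of $\succs{v}$. My strategy is two successive applications of the Forward clause of Lemma~\ref{lemma:forward_propagation}. First I propagate the token of $v$ to $v'$, obtaining a well-defined $c'$ with $c'(v') \geqslant c'(v) = c(v) > 0$ that agrees with $c$ outside $\succs{v}$. Then I propagate the token of $w$ to $w'$, obtaining $c''$ with $c''(w') \geqslant c''(w) = c'(w) > 0$ that agrees with $c'$ outside $\succs{w}$. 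Because the hypothesis gives $v' \notin \succs{w}$, the second step leaves $v'$ untouched, so $c''(v') = c'(v') > 0$; together with $c''(w') > 0$ this makes $c''$ the required witness. Reachability is preserved throughout, since each Forward step fixes every root (the source value is kept, and all other roots lie outside the successor set that is modified), whence $c''_{\mid N_2} = c_{\mid N_2} \in R(N_2, m_2)$.

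The main obstacle is the middle step: because $w \in \succs{v}$, the first propagation acts inside $\succs{v}$ and could a priori destroy the token in $w$, leaving nothing to propagate to $w'$. The observation that resolves this is that $w$ is \emph{pinned} by its redundancy parent $v$: the configuration $c'$ produced by Forward is well-defined and satisfies $c'(v) = c(v) > 0$, so applying (CEq) to $X \red w$ in $c'$ gives $c'(w) = \sum_{x \in X} c'(x) \geqslant c'(v) > 0$ irrespective of what happened elsewhere in $\succs{v}$. Thus positivity of $w$ survives the first propagation automatically, and the two-step argument closes. I would finish by noting that $v' \neq w'$ (as $v' \notin \succs{w}$ whereas $w' \in \succs{w}$), so no degenerate case arises.
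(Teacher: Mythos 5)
Your proof is correct, but it takes a genuinely different route from the one in the paper. The paper's proof first propagates the token from $w$ to $w'$ (establishing $v \bconc w'$), and then, for $v' \neq v$, picks an intermediate child $v_0 \neq w$ of $v$ with $v_0 \rightarrow^\star v'$, propagates $v$'s token to $v_0$, and finishes with a three-way case analysis: the main case hands off to Lemma~\ref{lemma:concurrent_nodes_propagation} applied to the incomparable pair $(v_0, w)$, one case is vacuous, and the last case ($w \in \succs{v_0}$) is excluded by showing it would force a non-safe marking. You instead propagate in the opposite order --- first $v \rightsquigarrow v'$, then $w \rightsquigarrow w'$ --- and replace the entire case analysis by the single ``pinning'' observation that, since the Forward clause preserves $c'(v) = c(v) > 0$ and $c'$ is well-defined, (CEq) applied to $X \red w$ (with $v \in X$) forces $c'(w) \geqslant c'(v) > 0$ no matter how the first propagation rearranged the rest of $\succs{v}$; the hypothesis $v' \notin \succs{w}$ then guarantees the second propagation does not disturb $v'$. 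Your argument is shorter, avoids the dependence on Lemma~\ref{lemma:concurrent_nodes_propagation}, and --- notably --- never invokes the safeness assumption that the paper needs to discharge its third case, so it is in fact slightly more general. The auxiliary checks you make (acyclicity gives $v \notin \succs{w}$; roots are never proper successors, so both Forward steps fix $c_{\mid N_2}$ and reachability is preserved) are exactly the ones needed to make the two-step construction a legitimate witness for $\bconc$.
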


\begin{proof}
  Assume $v \bconc v$ and $v \red w$. By definition, there is a total, well-defined
  configuration $c$ such that $c(v) > 0$ and $c_{\mid N_2} \in R(N_2, m_2)$.
  Furthermore, by $v \red w$ and condition (CEq), we have $c(w) > 0$.

  Take $w'$ in $\succs{w}$. From Lemma~\ref{lemma:forward_propagation} we can
  find a total, well-defined configuration $c'$ such that $c'(w') \geqslant
  c'(w) = c(w) > 0$ and $c'(x) = c(x)$ for any node $x$ not in $\succs{w}$.
  Since $v$ is not in $\succs{w}$ we have $c'(v) = c(v)$. Likewise, places from
  $N_2$ are roots and therefore cannot be in $\succs{w}$. So we have $c'_{\mid
  N_2} \equiv c_{\mid N_2}$, which means $c'_{\mid N_2}$ is reachable in $(N_2,
  m_2)$. At this point we have $v \bconc w'$.

  Now, consider $v' \in \succs{v} \setminus \succs{w}$ with $v' \neq v$ (the expected result already holds if $v' = v$).
  Necessarily, there exists $v_0 \neq w$ such that $v \rightarrow v_0$ and $v_0 \rightarrow^\star v'$.
  We can use the forward
  propagation in Lemma~\ref{lemma:forward_propagation} on $c'$ to find a total, well-defined
  configuration $c''$ such that $c''(v_0) \geqslant c''(v) = c'(v)$ and $c''(x) =
  c(x)$ for all nodes $x$ not in $\succs{v}$, and so, $c''_{\mid N_2}$ is
  reachable in $(N_2, m_2)$. Since configuration $c''$ is well-defined we have
  (condition (CEq)) that $c''(w) \geqslant c''(v)$. 
  We consider three cases:
  \begin{itemize}
  \item[$\cdot$] Either $v_0 \notin \succs{w}$ and $w \notin \succs{v_0}$,
    and we conclude by Lemma~\ref{lemma:concurrent_nodes_propagation} that $v' \bconc w'$ holds for every node $v' \in \succs{v_0}$.
  \item[$\cdot$] Or $v_0 \in \succs{w}$: this case cannot happen since by hypothesis $v' \notin \succs{w}$ and $v_0 \rightarrow^\star v'$.

  \item[$\cdot$] Or $w \in \succs{v_0}$: by applying the same proof than the one at the end of Lemma~\ref{lemma:concurrent_nodes_propagation},
    we can show that this case leads to a non-safe marking, which is therefore excluded.
    
  \end{itemize}
  As a result, we have $v' \bconc w'$ for all $v' \in \succs{v} \setminus \succs{w}$ and all $w' \in \succs{w}$.
\end{proof}

\begin{lemma}
  \label{lemma:concurrent_places_propagation} If $v \bconc w$ holds with $v \notin \succs{w}$
  and $w \notin \succs{v}$,
  then one of the following two conditions is true.
  \begin{description}
  \item[{\textbf{(Redundancy)}}] There is a nondead node $v_0$ such that
    $v_0 \red w_0$ and either $(v, w)$ or $(w, v)$ are in
    $(\succs{v_0} \setminus \succs{w_0}) \times \succs{w_0}$.
  \item[{\textbf{(Distinct)}}] There is a pair of distinct roots $(v_0, w_0)$ such
    that $v_0 \bconc w_0$ with $v \in \succs{v_0}$ and $w \in \succs{w_0}$.
  \end{description}
\end{lemma}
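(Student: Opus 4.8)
The plan is to produce a witnessing configuration for $v \bconc w$ and then trace its tokens back to the roots, distinguishing whether $v$ and $w$ are fed by the same root or by two different ones. By Definition~\ref{def:bconc} there is a reachable, total, well-defined configuration $c$ with $c(v) > 0$ and $c(w) > 0$. Applying the (Backward) clause of Lemma~\ref{lemma:forward_propagation} to $v$ and to $w$ yields roots $r_1, r_2$ with $r_1 \rightarrow^\star v$, $r_2 \rightarrow^\star w$ and $c(r_1), c(r_2) > 0$; moreover, since the backward step always descends to a positive parent, it actually exhibits paths $r_1 \rightarrow^\star v$ and $r_2 \rightarrow^\star w$ on which $c$ stays strictly positive throughout.

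First I would dispatch the case $r_1 \neq r_2$. As $c$ is reachable and $c(r_1), c(r_2) > 0$, the very same $c$ witnesses $r_1 \bconc r_2$, and $r_1, r_2$ are distinct roots. Since $v \in \succs{r_1}$ and $w \in \succs{r_2}$, this is exactly the (Distinct) alternative with $(v_0, w_0) = (r_1, r_2)$.

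The substantive case is $r_1 = r_2 = r_0$, where I must manufacture the redundancy arc. Here I would compare the two $c$-positive paths $r_0 \rightarrow^\star v$ and $r_0 \rightarrow^\star w$ and take the node $u^{\ast}$ at which they last agree: beyond $u^{\ast}$ the paths continue through two distinct successors $u^{\ast} \rightarrow a$ (toward $v$) and $u^{\ast} \rightarrow b$ (toward $w$), with $a \neq b$ and $c(u^{\ast}), c(a), c(b) > 0$, hence all equal to $1$ by safeness (Lemma~\ref{lemma:safe_configurations}). Note $u^{\ast} \neq v, w$, for otherwise $v \in \succs{w}$ or $w \in \succs{v}$. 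The pivotal local fact is that these two out-arcs cannot both be agglomeration arcs: if $u^{\ast} \agg X$ with $a, b \in X$, then (CEq) forces $c(u^{\ast}) = \sum_{x \in X} c(x) \geqslant c(a) + c(b) = 2$, contradicting $c(u^{\ast}) = 1$. Therefore at least one of them, say $u^{\ast} \red a$ (toward $v$; the case $u^{\ast} \red b$ is symmetric and yields the orientation $(v,w)$), is a redundancy arc, and $u^{\ast}$ is nondead since $c(u^{\ast}) > 0$.

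It then remains to establish the separation condition, and this is where I expect the genuine difficulty. Setting $v_0 = u^{\ast}$ and $w_0 = a$, I already have $v_0 \red w_0$, $v \in \succs{a} = \succs{w_0}$, and $w \in \succs{u^{\ast}} = \succs{v_0}$; the missing piece is $w \notin \succs{a}$, which would place $(w, v) \in (\succs{v_0} \setminus \succs{w_0}) \times \succs{w_0}$ as required by (Redundancy). Suppose instead $a \rightarrow^\star w$. Then $w \in \succs{a} \cap \succs{b}$ with $a \neq b$ two positive successors of $u^{\ast}$, so the two successor cones re-merge. As in the proof of Lemma~\ref{lemma:concurrent_nodes_propagation}, condition (T3) forbids removing a node twice, so a re-merge can only occur at a redundancy target $r$ with two distinct sources, one reachable from $a$ and one from $b$; propagating the positive tokens of $a$ and $b$ forward to those two sources produces a reachable configuration with $c'(r) \geqslant 2$, contradicting safeness. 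The main obstacle is controlling this DAG-specific re-merging rigorously: the corresponding step in Lemma~\ref{lemma:concurrent_nodes_propagation} assumes the two cones are independent ($a \notin \succs{b}$ and $b \notin \succs{a}$). I would therefore either argue that divergence at $u^{\ast}$ already forces this independence, or, when it fails, close the argument by well-founded induction on $\lvert \succs{u} \rvert$ applied to an auxiliary claim ``every nondead common ancestor of $v$ and $w$ yields (Redundancy)'': in the problematic sub-case $a \rightarrow^\star w$, the node $a$ is itself a nondead common ancestor of $v$ and $w$ with $\succs{a} \subsetneq \succs{u^{\ast}}$ (by acyclicity, T5), so the induction descends until the redundancy child no longer reaches the other target.
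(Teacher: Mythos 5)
Your proposal is correct and shares the paper's top-level skeleton --- take a reachable witnessing configuration, apply (Backward) from Lemma~\ref{lemma:forward_propagation} to pull $v$ and $w$ back to marked roots, and split on whether the two roots coincide, the distinct-root case giving (Distinct) immediately --- but you handle the same-root case by a genuinely different argument. The paper proceeds by contraposition: if no suitable redundancy arc exists, both $v$ and $w$ are reached from the root by pure agglomeration paths, and since (T3) gives every agglomeration target a unique parent, (CEq) forces the token to follow at most one branch of the first $\agg$-fork, killing $c(v)$ or $c(w)$. You instead argue constructively: you locate the last common node $u^{\ast}$ of the two marked paths, observe that a pure agglomeration fork there would violate safeness via (CEq), and conclude that one diverging arc is a redundancy arc. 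Your version has the merit of making explicit the separation condition $w \notin \succs{w_0}$, which the paper's proof passes over silently (its claim ``if no such node exists then both $v$ and $w$ are reached by $A$-edges only'' is precisely the contrapositive of that unproved step). The obstacle you flag --- that the cones of the two fork children $a$ and $b$ might not be independent --- does close by the device you name: if, say, $a \in \succs{b}$, then $a$ acquires a second parent distinct from $u^{\ast}$, which by (T3) must be a redundancy source; forward-propagating the unit token of $b$ to that source while $c(u^{\ast})=1$ is preserved (it lies outside $\succs{b}$) yields a value $\geqslant 2$ at $a$, contradicting Lemma~\ref{lemma:safe_configurations}. With that independence in hand, the re-merge argument from the end of Lemma~\ref{lemma:concurrent_nodes_propagation} applies verbatim to rule out $w \in \succs{a}$, so your fallback induction on $\lvert\succs{u}\rvert$ is not needed. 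In short: same decomposition, different and somewhat more detailed treatment of the key case; the one step you leave as a sketch is genuinely closeable by the tools you cite.
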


\begin{proof}
  Assume $v \bconc w$. Then there is a total, well-defined
  configuration $c$ such that $c_{\mid N_2} \in R(N_2, m_2)$
  and $c(v), c(w) = 1$ (the nets are safe).
  By the backward-propagation property in Lemma~\ref{lemma:forward_propagation}
  there exists two roots $v_0$ and $w_0$ such that $c(v_0) = c(w_0) = 1$ with $v
  \in \succs{v_0}$ and $w \in \succs{w_0}$. We need to consider two cases:
  \begin{itemize}
    \item[$\cdot$] Either $v_0 \neq w_0$, that is condition (Distinct).
    \item[$\cdot$] Or we have $v_0 = w_0$. We prove that there must be a node
    $v_1$ such that $v_0 \to^\star v_1$ and $v_1 \red w_1$ with either $(v, w)$
    or $(w, v)$ in $(\succs{v_1} \setminus \succs{w_1}) \times \succs{w_1}$. We
    prove this result by contradiction.
    Indeed, if no such node exists then both $v$ and $w$ can be reached from
    $v_0$ by following only edges in $A$ (agglomeration arcs). 
    Consider $v_0 \agg Y$, there are two nodes $v',w' \in Y$ such that $v \in
    \succs{v'}$ and $w \in \succs{w'}$. Since $c$ well-defined, from (CEq)
    either $c(v') = 0$ or $c(w') = 0$. Take $c(v') = 0$ and the agglomeration
    path from $v'$ to $v$, as $v' \agg a_0 \agg \dots \agg a_n = v$ with $n \in
    \mathbb{N}$. By induction on this path, we necessarily have $c(a_i) = 0$ for
    all $i \in 0..n$, since $c$ is well-defined and a node can only have on
    parent (condition (T3)). Hence, $c(v') = 0$ that contradicts $v \bconc w$.
  \end{itemize}
\end{proof}

\subsubsection{Algorithm is sound and complete}

\begin{theorem}
  \label{th:matrix}
  If $\Conc$ is the matrix returned by a call to $Matrix(\tfg{E},
  \|)$, with $\|$ the concurrency relation between roots of $\tfg{E}$ (meaning
  $N_2$ and constants), then for all nodes $v, w$ we have $v
  \bconc w$ if and only if $\Conc[v,w] = 1$.
\end{theorem}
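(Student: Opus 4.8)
The plan is to prove the two inclusions separately: \emph{soundness} ($\Conc[v,w]=1 \implies v \bconc w$) and \emph{completeness} ($v \bconc w \implies \Conc[v,w]=1$). Before either direction I would establish the governing invariant that \texttt{Propagate} is only ever invoked on \emph{nondead} nodes. The outer loop of \texttt{Matrix} (line~\ref{line:for-alive}) calls it only on nondead roots, and the recursive call (line~\ref{line:propagate-recursion}) fires on children $w$ with $v \to w$; since $v$ is nondead by induction and $v \to^\star w$, Lemma~\ref{lemma:non_dead_propagation} gives that $w$ is nondead too. This invariant is exactly what licenses the two lemmas inside \texttt{Propagate} that assume a nondead source node.

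For soundness I would observe that $\Conc$ is written to in exactly three places and match each to an already-proved lemma. The first loop of \texttt{Propagate} (lines~\ref{line:non_dead1}--\ref{line:non_dead2}) sets $\Conc[v,w]=1$ for $w \in \succs{v}$ where $v$ is nondead; this is justified by Lemma~\ref{lemma:non_dead_propagation}, which yields $v \bconc w$. The redundancy loop (lines~\ref{line:redundancy_product_1}--\ref{line:redundancy_product_2}, over the pairs of line~\ref{line:propag-pairs}) is justified by Lemma~\ref{lemma:concurrency_redundancy_nodes}, again using that the current node is nondead. Finally, the distinct-roots loop of \texttt{Matrix} (lines~\ref{line:set-distinct-1}--\ref{line:set-distinct-2}) is justified by Lemma~\ref{lemma:concurrent_nodes_propagation}: two distinct concurrent roots $v,w$ satisfy $v \notin \succs{w}$ and $w \notin \succs{v}$ (a root is a strict successor of no node), so all pairs in $\succs{v} \times \succs{w}$ are concurrent. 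Symmetry of $\Conc$ is preserved since each site writes both $\Conc[v',w']$ and $\Conc[w',v']$.

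For completeness I assume $v \bconc w$ and split into cases. If $v = w$ then $v$ is nondead, so Lemma~\ref{lemma:non_dead_places_completeness} provides a nondead root $v_0$ with $v_0 \to^\star v$; since \texttt{Propagate}$(v_0)$ traverses all of $\succs{v_0}$, it eventually runs on $v$ and its first loop sets $\Conc[v,v]=1$. If $v \neq w$ but (say) $w \in \succs{v}$, the same argument places a \texttt{Propagate} call on $v$, whose first loop sets $\Conc[v,w]=1$. Otherwise $v \notin \succs{w}$ and $w \notin \succs{v}$, and I invoke the completeness Lemma~\ref{lemma:concurrent_places_propagation}: in case (Distinct) the pair is covered by the distinct-roots loop of \texttt{Matrix}; in case (Redundancy) there is a nondead node $v_1$ with $v_1 \red w_1$ and $(v,w)$ (or $(w,v)$) in $(\succs{v_1}\setminus\succs{w_1}) \times \succs{w_1}$, and since $v_1$ is nondead a \texttt{Propagate} call reaches it (again via Lemma~\ref{lemma:non_dead_places_completeness}), whose redundancy loop sets the entry.

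The main obstacle is not any single deep step but the bookkeeping that ties the \emph{static} structure of the algorithm to the \emph{semantic} relation $\bconc$: one must argue that whenever a lemma guarantees $v \bconc w$ the recursion of \texttt{Propagate} actually visits the node that triggers the corresponding write, and conversely that every write is backed by a lemma whose hypotheses (nondeadness of the source node, the non-successor conditions) are met. The delicate point is the reachability-of-the-right-node argument in the completeness direction — ensuring the node bearing the relevant redundancy arc lies in $\succs{v_0}$ for some nondead root $v_0$ and is therefore traversed — which rests entirely on Lemma~\ref{lemma:non_dead_places_completeness} and on the fact that \texttt{Propagate} performs a full DAG traversal from each nondead root.
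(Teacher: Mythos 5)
Your proposal is correct and follows essentially the same structure as the paper's proof: soundness by matching each of the three write sites to Lemmas~\ref{lemma:non_dead_propagation}, \ref{lemma:concurrency_redundancy_nodes} and \ref{lemma:concurrent_nodes_propagation} (with the invariant that \texttt{Propagate} is only invoked on nondead nodes), and completeness via Lemmas~\ref{lemma:non_dead_places_completeness} and \ref{lemma:concurrent_places_propagation} together with the full-traversal argument. The only cosmetic difference is that the paper organizes the case split as diagonal versus off-diagonal entries rather than by direction first, but the underlying reasoning is identical.
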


\begin{proof}
  First, let us remark that the call to $Matrix(\tfg{E}, \|)$
  always terminates, since the only recursion (\texttt{Propagate}) follows the DAG structure.
  We divide the proof into two different cases: first we
  show that the computation of nondead nodes (the diagonal of $\Conc$ and the nondead
  nodes of $\bconc$) is sound and complete. Next, we prove soundness and completeness
  for pairs of distinct nodes.\\[-1em]

  \noindent\textbf{Nondead places, diagonal of $\Conc$}:

  (Completeness) If $v \bconc v$ holds for some node $v$, then, by Lemma~\ref{lemma:non_dead_places_completeness},
  there exists a nondead root $v_0$ with $v \in \succs{v_0}$.
  Hence, Algorithm~\ref{alg:change_of_dimension} invokes \texttt{Propagate} on line~\ref{line:alive} with $v_0$.
  Then, all nodes in $\succs{v_0}$ are recursively visited by line~\ref{line:propagate-recursion} in Algorithm~\ref{alg:token_propagation}
  including~$v$. As a consequence, $\Conc[v,v]$ is set to $1$ on line \ref{line:non_dead1} (and remains equal to $1$ until the end of algorithm,
  since no line of the algorithm sets values of $\Conc$ to $0$ after the initialization line~\ref{line:init0}).
  This concludes the completeness part for nondead places.

  (Soundness) Conversely, assume $\Conc[v,v] = 1$ for some node $v$. We consider three subcases:
  \begin{itemize}
  \item[$\cdot$] $\Conc[v,v]$ was set on line~\ref{line:set-distinct-1} or~\ref{line:set-distinct-2} of Algorithm~\ref{alg:change_of_dimension}:
    this means that there exist two distinct concurrent roots $v_0$ and $w_0$ such that $v \in \succs{v_0} \cap \succs{w_0}$.
    Hence $v_0$ is nondead (as well as $w_0$). This implies that $v$ is nondead by Lemma~\ref{lemma:non_dead_propagation}.
    
  \item[$\cdot$] $\Conc[v,v]$ was set on line~\ref{line:non_dead1} of Algorithm~\ref{alg:token_propagation}:
    the \texttt{for} loops on line~\ref{line:for-alive} of Algorithm~\ref{alg:change_of_dimension} and on line~\ref{line:for-propagate} of Algorithm~\ref{alg:token_propagation}
    ensure that \texttt{Propagate} is only invoked on successors of nondead roots of $\tfg{E}$.
    Hence, $v$ belongs to $\succs{v_0}$ for some nondead root $v_0$, and thus $v \bconc v$ holds by Lemma~\ref{lemma:non_dead_propagation}.
    
  \item[$\cdot$] $\Conc[v,v]$ was set on line~\ref{line:redundancy_product_1} or~\ref{line:redundancy_product_2} of Algorithm~\ref{alg:token_propagation}:
    this subcase is not possible, since these lines only consider pairs $(v',w')$ of distinct nodes.
  \end{itemize}

 To conclude this first case, the algorithm is sound and complete with respect to nondead places and the diagonal of $\Conc$.
  \\[-1em]
  
 \noindent\textbf{Concurrent places}:

 (Completeness) We assume $v \bconc w$ holds for two distinct nodes $v$ and $w$.
 This implies that both $v$ and $w$ are nondead, that is $v \bconc v$
 and $w \bconc w$.
 If we have $v \in \succs{w}$, then $\Conc[v,w]$ is set to $1$ on line~\ref{line:non_dead1} or~\ref{line:non_dead2} of
 Algorithm~\ref{alg:token_propagation},
 and similarly if $w \in \succs{v}$, which is the expected result.
 Hence, we now assume that $v \notin \succs{w}$ and $w \notin \succs{w}$, and thus Lemma~\ref{lemma:concurrent_places_propagation}
 applies. We consider the two cases of the lemma:
 \begin{itemize}
 \item[$\cdot$](Redundancy): then, $\Conc[v,w]$ is set to $1$ on line~\ref{line:redundancy_product_1} or~\ref{line:redundancy_product_2} of
   Algorithm~\ref{alg:token_propagation}.
      
 \item[$\cdot$](Distinct): then $\Conc[v,w]$ is set to $1$ on line~\ref{line:set-distinct-1} or~\ref{line:set-distinct-2}
   of Algorithm~\ref{alg:change_of_dimension}.  
 \end{itemize}
 This concludes the completeness of the algorithm for concurrent places.
 
 (Soundness) We assume $\Conc[v,w] = 1$ for some distinct nodes $v$, $w$.
  We consider three subcases:
  \begin{itemize}
  \item[$\cdot$] $\Conc[v,w]$ was set on line~\ref{line:set-distinct-1} or~\ref{line:set-distinct-2} of Algorithm~\ref{alg:change_of_dimension}:
    we conclude by Lemma~\ref{lemma:concurrent_nodes_propagation} that $v \bconc w$ holds.
    
  \item[$\cdot$] $\Conc[v,w]$ was set on line~\ref{line:non_dead1} or~\ref{line:non_dead2} of Algorithm~\ref{alg:token_propagation}:
    we conclude by Lemma~\ref{lemma:non_dead_propagation}.
    
  \item[$\cdot$] $\Conc[v,w]$ was set on line~\ref{line:redundancy_product_1} or~\ref{line:redundancy_product_2} of Algorithm~\ref{alg:token_propagation}:
    we conclude by Lemma~\ref{lemma:concurrency_redundancy_nodes}.
  \end{itemize}
  This concludes the soundness of the algorithm for concurrent places.

  As a result, the algorithm is sound and complete for nondead places and for concurrent places.
\end{proof}


\subsection{Extensions to incomplete concurrency relations}

With our approach, we only ever writes 1s into the concurrency matrix $\Conc$.
This is enough since we know relation $\conc_2$ exactly and, in this case,
relation $\conc_1$ must also be complete (we can have only $0$s or $1$s in
$\Conc$). This is made clear by the fact that $\Conc$ is initialized with ${0}$s
everywhere.
We can extend our algorithm to support the case where we only have a partial
knowledge of $\conc_2$. This is achieved by initializing $\Conc$ with the
special value $\bullet$ (undefined) and adding rules that let us ``propagate
$0$s'' on the TFG, in the same way that our total algorithm only propagates
$1$s. For example, we know that if $\Conc[v,w]=0$ ($v,w$ are nonconcurrent) and
$v \agg w'$ (we know that always $c(v) \geq c(w')$ on reachable configurations)
then certainly $\Conc[w',w] =0$. Likewise, we can prove that following rule for
propagating ``dead nodes'' is sound: if $X \red v$ and $\Conc[w,w]=0$ (node $w$
is dead) for all $w \in X$ then $\Conc[v,v]=0$.

Partial knowledge on the concurrency relation can be useful. Indeed, many use
cases can deal with partial knowledge or only rely on the nonconcurrency relation
(a $0$ on the concurrency matrix). This is the case, for instance, when
computing NUPN partitions, where it is always safe to replace a $\bullet$ with a
$1$. It also means that knowing that two places are nonconcurrent is often more
valuable than knowing that they are concurrent; $0$s are better than $1$s.

We have implemented an extension of our algorithm for the case of incomplete
matrices using this idea, and we report some results obtained with it.
Unfortunately, we do not have enough space to describe the full algorithm here.
It is slightly more involved than for the complete case and is based on a
collection of six additional axioms. While we can show that the algorithm is
sound, completeness takes a different meaning: we show that when nodes $p$ and
$q$ are successors of roots $v_1$ and $v_2$ such that $\Conc[v_i,v_i] \neq
\bullet$ for all $i \in 1..2$ then necessarily $\Conc[p,q] \neq \bullet$.

In the following, we use the notation $v \bind w$ to say $\neg (v \Conc w)$;
meaning $v, w$ are nonconcurrent according to $\Conc$. With our notations, $v
\bind v$ means that $v$ is dead: there is no well-defined, reachable
configuration $c$ with $c(v) > 0$.

\subsubsection{Propagation of dead nodes}
We prove that a dead node, $v$, is necessarily nonconcurrent to all the other
nodes. Also, if all the ``direct successors'' of a node are dead then also is
the node.

\begin{lemma}
  \label{lemma:independent_nodes_propagation_1} Assume $v$ a node in $\tfg{E}$.
  If $v \bind v$ then for all nodes $w$ in $\tfg{E}$ we have $v \bind w$.
\end{lemma}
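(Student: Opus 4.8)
The plan is to reduce the statement directly to Definition~\ref{def:bconc}, by unfolding the meaning of $\bind$ as the semantic negation of $\bconc$. First I would recall that, by Definition~\ref{def:bconc}, $v \bconc w$ holds exactly when there is a total, well-defined, \emph{reachable} configuration $c$ of $\tfg{E}$ (that is, $c_{\mid N_2} \in R(N_2, m_2)$) with $c(v) > 0$ and $c(w) > 0$. Consequently $v \bind w$ means that \emph{no} such configuration exists, and the special case $v \bind v$ specialises to: there is no total, well-defined, reachable configuration $c$ with $c(v) > 0$, i.e.\ $v$ is dead. Fixing this reading is the only genuine setup step.

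The core argument is then a one-line contrapositive. Assuming $v \bind v$, I would take an arbitrary node $w$ and argue by contradiction: if $v \bind w$ failed, then $v \bconc w$ would hold, yielding a total, well-defined, reachable configuration $c$ with $c(v) > 0$ and $c(w) > 0$. But this very $c$ already witnesses $c(v) > 0$ on a reachable well-defined configuration, contradicting the deadness of $v$ recorded by $v \bind v$. Hence $v \bind w$, and since $w$ was arbitrary this holds for all nodes $w$.

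I do not expect any real obstacle here: the result needs neither the token-propagation machinery of Lemma~\ref{lemma:forward_propagation} nor any appeal to safeness, and follows purely by monotonicity of the existential quantifier in Definition~\ref{def:bconc} (a witness that $v$ is concurrent with \emph{anything} is in particular a witness that $v$ is nondead). The single point to state carefully is precisely that $\bind$ is interpreted as $\neg\,\bconc$, so that the whole argument ranges over configurations of $\tfg{E}$ rather than over entries already written into the matrix $\Conc$; with that convention fixed, the proof is immediate.
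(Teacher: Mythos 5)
Your proposal is correct and follows essentially the same route as the paper: both unfold $v \bind v$ as the absence of any total, well-defined, reachable configuration with $c(v) > 0$, and then observe that any witness to $v \bconc w$ would be exactly such a configuration. Your explicit remark that $\bind$ must be read as the semantic negation of $\bconc$ (not as an entry of the matrix $\Conc$) matches the convention the paper fixes just before the lemma, so there is no gap.
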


\begin{proof}
  Assume $v \bind v$. Then for any total, well-defined configuration $c$ such
  that $c_{\mid N_2}$ is reachable in $(N_2, m_2)$ we have $c(v) = 0$. By
  definition of the concurrency relation $\bconc$, $v$ cannot be concurrent to
  any node.
\end{proof}

\begin{lemma}
  \label{lemma:dead_node} Assume $v$ a node in $\tfg{E}$ such that $v
  \agg X$ or $X \red v$. Then $v \bind v$ if and only if $w \bind w$ for all
  nodes $w$ in $X$.
\end{lemma}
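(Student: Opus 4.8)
The plan is to prove the logically equivalent, positively phrased statement: $v$ is nondead (i.e.\ $v \bconc v$) if and only if at least one $w \in X$ is nondead. Negating both sides then yields the lemma exactly, since $v \bind v$ abbreviates $\neg(v \bconc v)$, and ``all $w \in X$ are dead'' is precisely the negation of ``some $w \in X$ is nondead''. I would first record that, by conditions (T3)/(T4), exactly one of $v \agg X$ or $X \red v$ can hold, but that the argument is identical in both cases and can therefore be carried out uniformly.

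The key observation, which collapses the two cases, is that both arc patterns correspond to the single equation $v = \sum_{w \in X} w$. Hence condition (CEq) of well-definedness yields, for every \emph{total} well-defined configuration $c$, the identity $c(v) = \sum_{w \in X} c(w)$. Since configurations take values in $\Nat$, every summand is non-negative, so $c(v) > 0$ holds if and only if $c(w) > 0$ for at least one $w \in X$. This equivalence of positivity is the entire engine of the proof.

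Next I would dispatch the two directions by reusing a single witnessing configuration. For the forward direction, if $v \bconc v$ there is a total, well-defined, reachable configuration $c$ (that is, $c_{\mid N_2} \in R(N_2, m_2)$) with $c(v) > 0$; by the identity above $\sum_{w \in X} c(w) > 0$, so some $w_0 \in X$ has $c(w_0) > 0$, and the very same $c$ witnesses $w_0 \bconc w_0$. For the backward direction, if some $w_0 \in X$ is nondead, witnessed by a reachable $c$ with $c(w_0) > 0$, then $c(v) = \sum_{w \in X} c(w) \geqslant c(w_0) > 0$, so the same $c$ witnesses $v \bconc v$.

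The proof is essentially immediate, and the only point needing care---which is also its crux---is that both directions reuse the same witness rather than constructing a new one. Unlike the concurrency lemmas (e.g.\ Lemma~\ref{lemma:concurrent_nodes_propagation}), there is no need to invoke the token-propagation machinery of Lemma~\ref{lemma:forward_propagation}: reachability of the witness is trivially preserved because $c$ is left untouched, and (CEq) together with non-negativity alone transfers positivity between $v$ and the nodes of $X$ in both directions.
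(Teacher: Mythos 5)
Your proof is correct and is essentially the same argument as the paper's: both rest entirely on (CEq) giving $c(v) = \sum_{w \in X} c(w)$ with non-negative summands, so that positivity transfers between $v$ and some node of $X$ within a single reused witness configuration. The paper merely packages the two directions as proofs by contradiction, whereas you prove the contrapositive (positively phrased) equivalence directly; this is a cosmetic difference, and your explicit remark that the witness's reachability is preserved because $c$ is untouched is, if anything, slightly more careful than the paper's wording.
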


\begin{proof}
  We prove by contradiction both directions.
  
  Assume $v \bind v$ and take $w \in X$ such that $w \bconc w$. Then there is a
  total, well-defined configuration $c$ such that $c(w) > 0$. Necessarily, since
  $v \bind v$ we have $c(v) = 0$, which contradicts (CEq).

  Next, assume $v \bconc v$ and $w \bind w$ for every node $w \in X$. Then there
  is a total, well-defined configuration $c$ such that $c(v) > 0$. Necessarily,
  for all nodes $w \in X$ we have $c(w) = 0$, which also contradicts (CEq).
\end{proof}

These properties imply the soundness of the following three axioms:
\begin{enumerate}
\item If $\Conc[v,v] = 0$ then $\Conc[v, w] = 0$ for all node $w$ in $\tfg{E}$.
\item If $v \agg X$ or $X \red v$ and $\Conc[w,w] = 0$ for all nodes $w \in X$
  then $\Conc[v,v] = 0$.
\item If $v \agg X$ or $X \red v$ and $\Conc[v,v] = 0$ then $\Conc[w,w] = 0$ for
  all nodes $w \in X$.
\end{enumerate}

\subsubsection{Nonconcurrency between siblings}
We prove that direct successors of a node are nonconcurrent from each other (in
the case of safe nets). This is basically a consequence of the fact that $c(v) =
c(w) + c(w') + \dots$ and $c(v) \leqslant 1$ implies that at most one of $c(w)$
and $c(w')$ can be equal to $1$ when the configuration is fixed.

\begin{lemma}
  \label{lemma:independent_nodes_propagation_2} Assume $v$ a node in $\tfg{E}$
  such that $v \agg X$ or $X \red v$. For every pair of nodes $w,w'$ in $X$, we
  have that $w \neq w'$ implies $w \bar{C} w'$.
\end{lemma}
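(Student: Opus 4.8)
The plan is to argue by contradiction, leveraging safeness through Lemma~\ref{lemma:safe_configurations}. Suppose, towards a contradiction, that $w \neq w'$ are two nodes of $X$ with $w \bconc w'$. By Definition~\ref{def:bconc} there is a total, well-defined configuration $c$ with $c_{\mid N_2} \in R(N_2, m_2)$ and $c(w) > 0$, $c(w') > 0$. The whole proof then reduces to deriving a token count on $v$ that exceeds what safeness permits.

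First I would check that Lemma~\ref{lemma:safe_configurations} is applicable to the node $v$. Since $c_{\mid N_2}$ is reachable, Theorem~\ref{th:configuration_reachability} gives that $c_{\mid N_1}$ is reachable in $(N_1, m_1)$, so $c$ is a reachable configuration in the sense required by that lemma. Moreover $v$ is not a constant node: in both cases $v \agg X$ and $X \red v$ the node $v$ is the \emph{target} of an arc, whereas condition (T2) of Definition~\ref{def:well-formed-tfg} forces every node of $K$ to be a root. Hence $v \notin K$, and Lemma~\ref{lemma:safe_configurations} yields $c(v) \in \{0, 1\}$.

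The contradiction then comes from the equation attached to $v$. Since $c$ is total we have $c(v) \neq \bot$, so condition (CEq) gives $c(v) = \sum_{u \in X} c(u)$. As all the values $c(u)$ are non-negative integers and $w, w' \in X$ are distinct, we obtain $c(v) \geqslant c(w) + c(w') \geqslant 2$, because $c(w), c(w') > 0$ forces $c(w), c(w') \geqslant 1$. This contradicts $c(v) \leqslant 1$. Therefore no such configuration $c$ can exist, which is exactly the statement $w \bind w'$.

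I expect no genuine obstacle here: the argument is a short contradiction, and the only points requiring a little care are verifying the two hypotheses of Lemma~\ref{lemma:safe_configurations}---namely that $c_{\mid N_1}$ is reachable (obtained from Theorem~\ref{th:configuration_reachability}) and that $v \notin K$ (obtained from (T2))---after which the inequality $c(v) \geqslant 2$ is immediate from (CEq). The essential ingredient is simply that an agglomeration or redundancy equation makes $v$ an upper bound for the sum of its children's markings, and safeness caps that sum at $1$.
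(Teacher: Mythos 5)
Your argument is essentially identical to the paper's proof: assume $w \bconc w'$ for distinct $w, w' \in X$, take the witnessing reachable configuration $c$, apply (CEq) to get $c(v) \geqslant c(w) + c(w') \geqslant 2$, and contradict Lemma~\ref{lemma:safe_configurations}. One small slip in your extra verification step: in the case $v \agg X$, the node $v$ is the \emph{source} of the arcs $v \agg v_i$ (it is the inserted/agglomerated node), not their target, so condition (T2) does not by itself rule out $v \in K$ in that case --- only the case $X \red v$ makes $v$ the target of arcs and hence a non-root. The paper's own proof silently glosses over this point (implicitly, constant nodes in a safe setting only take values in $K(0)$ or $K(1)$, as reflected in the notational conventions of Sect.~\ref{sec:algorithm}), so your attempt to be careful here is commendable, but the justification you give for $v \notin K$ is not valid for the agglomeration case.
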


\begin{proof}
  The proof is by contradiction.  Take a pair of distinct nodes $w,w'$ in $X$
  and assume $w \bconc w'$. Then there exists a total, well-defined
  configuration $c$ such that $c(w) = 1$ and $ c(w') = 1$, with
  $c_{\mid N_2}$ reachable in $(N_2, m_2)$. Since $c$ must satisfy (CEq) we have
  $c(v) \geqslant 2$, which contradicts the fact that our nets are safe, see
  Lemma~\ref{lemma:safe_configurations}.
\end{proof}

This property implies the soundness of the following axiom:
\begin{enumerate}
  \setcounter{enumi}{3}
\item If $v \agg X$ or $X \red v$ then $\Conc[w,w'] = 0$ for all pairs of nodes
  $w, w' \in X$ such that $w \neq w'$.
\end{enumerate}

\subsubsection{Heredity and nonconcurrency}
We prove that if $v$ and $v'$ are nonconcurrent, then $v'$ must be nonconcurrent
from all the direct successors of $v$ (and reciprocally). This is basically a
consequence of the fact that $c(v) = c(w) + \dots$ and $c(v) + c(v') \leqslant
1$ implies that $c(w) + c(v') \leqslant 1$.

\begin{lemma}
  \label{lemma:independent_nodes_propagation_3} Assume $v$ a node in $\tfg{E}$
  such that $v \agg X$ or $X \red v$. Then for every node $v'$ such that $v
  \bind v'$ we also have $w \bind v'$ for every node $w$ in $X$. Conversely, if
  $w \bind v'$ for every node $w$ in $X$ then $v \bind v'$.
\end{lemma}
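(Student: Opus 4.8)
The plan is to prove both implications by contradiction, using as the sole ingredient the equation $c(v) = \sum_{w \in X} c(w)$ that condition (CEq) imposes whenever $c$ is total and well-defined. This equation is attached to the arc regardless of whether the relation is $v \agg X$ or $X \red v$, so the two cases of the hypothesis need not be treated separately. The only arithmetic facts I would invoke are that configuration values are non-negative integers, which gives $c(v) \geqslant c(w)$ for each $w \in X$, and dually that $c(v) > 0$ forces $c(w_0) > 0$ for at least one $w_0 \in X$.

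For the forward implication, I would assume $v \bind v'$, fix an arbitrary $w \in X$, and suppose towards a contradiction that $w \bconc v'$. By Definition~\ref{def:bconc} this yields a total, well-defined, reachable configuration $c$ with $c(w) > 0$ and $c(v') > 0$. Since $c(v) = \sum_{u \in X} c(u) \geqslant c(w) > 0$, the same configuration $c$ also witnesses $c(v) > 0$ and $c(v') > 0$, hence $v \bconc v'$, contradicting the hypothesis $v \bind v'$. I would conclude $w \bind v'$ for every $w \in X$.

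For the converse, I would assume $w \bind v'$ for all $w \in X$ and suppose towards a contradiction that $v \bconc v'$. This provides a reachable configuration $c$ with $c(v) > 0$ and $c(v') > 0$. From $c(v) = \sum_{w \in X} c(w) > 0$ and non-negativity, some $w_0 \in X$ satisfies $c(w_0) > 0$, so $c$ witnesses $w_0 \bconc v'$, contradicting $w_0 \bind v'$. Therefore $v \bind v'$.

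There is no genuine obstacle here: the statement is an immediate corollary of (CEq) together with the non-negativity of token counts, exactly mirroring the reasoning already used in Lemmas~\ref{lemma:independent_nodes_propagation_1}, \ref{lemma:dead_node} and~\ref{lemma:independent_nodes_propagation_2}. The single point worth a line of care is that the configurations in Definition~\ref{def:bconc} are total, which is what guarantees $c(v) \neq \bot$ and thus makes (CEq) applicable to $v$ in both directions of the argument.
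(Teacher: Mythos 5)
Your proof is correct and follows essentially the same route as the paper's: both directions proceed by contradiction from a witnessing total, well-defined, reachable configuration, using (CEq) together with non-negativity to transfer positivity between $v$ and the nodes of $X$. The only difference is cosmetic ordering — you derive $c(v)>0$ directly from (CEq) and then contradict $v \bind v'$, whereas the paper first deduces $c(v)=0$ from $v \bind v'$ and then contradicts (CEq) — and your remark about totality guaranteeing applicability of (CEq) is a valid point the paper leaves implicit.
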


\begin{proof}
  We prove by contradiction each property separately.

  Assume $v \bind v'$ and take $w \in X$ such that $w \bconc v'$. Then there is
  a total, well-defined configuration $c$ such that $c(w),c(v') > 0$.
  Necessarily, since $v \bind v'$ we must have $c(v) = 0$ or $c(v') = 0$. We
  already know that $c(v') > 0$, so $c(v) = 0$, which contradicts (CEq) since $w
  \in X$.
  
  Next, assume $w \bind v'$ for all nodes $w \in X$ and we have $v \bconc v'$.
  Then there is a total, well-defined configuration $c$ such that $c(v),c(v') >
  0$. Necessarily, for all nodes $w \in X$ we have $c(w) = 0$ or $c(v') = 0$. We
  already know that $c(v') > 0$, so $c(w) = 0$ for all nodes $w \in X$, which
  also contradicts (CEq).
\end{proof}

These properties imply the soundness of the following two axioms:
\begin{enumerate}
\setcounter{enumi}{4}
\item If $v \agg X$ or $X \red v$ and $\Conc[w,v'] = 0$ for all nodes $w \in X$
  then $\Conc[v,v'] = 0$.
\item If $v \agg X$ or $X \red v$ and $\Conc[v,v'] = 0$ then $\Conc[w,v'] = 0$
  for all nodes $w$ in $X$.
\end{enumerate}


\section{Experimental results}
\label{sec:experimental}

We have implemented a new tool, called \kong, for {Koncurrent places
  Grinder}, that is in charge of performing the ``inverse transforms''
that we described in Sect.~\ref{sec:reachability}
and~\ref{sec:algorithm}. This tool is open-source, under the GPLv3
license, and is freely available on
GitHub~\footnote{\url{https://github.com/nicolasAmat/Kong}}.

We use the extensive database of models provided by the Model Checking
Contest (MCC)~\cite{mcc2019,HillahK17} to experiment with our
approach. \kong\ takes as inputs Petri nets defined using either the
Petri Net Markup Language (PNML)~\cite{hillah2010pnml}, or the
Nest-Unit Petri Net (NUPN) format~\cite{garavel_nested-unit_2019}.

We do not compute net reductions with \kong\ directly, but rather rely
on another tool, called \reduce, that is developed inside the Tina
toolbox~\cite{tinaToolbox}. We used version 2.0 of \kong\ and 3.7 of
the Tina toolbox.

\subsection{Toolchains description}

We describe the toolchains used for the marking reachability decision
procedure and for the computation of concurrency matrices.


Figure~\ref{fig:reachability_decision} depicts the toolchain used for
checking if a given marking, $m_1'$, is reachable in an input net
$(N_1, m_1)$.  In this case, marking $m_1'$ is defined in an input
file, using a simple textual format.  The tool \kong\ retrieves the
reduction system, $E$, computed with \reduce\ and uses it to project
$m_1'$ into a marking $m_2'$, if possible. If the projection returns
an error,
we know that $m_1'$ cannot be reachable. Otherwise, we call an
auxiliary tool, in this case \sift, to explore the
state space of $(N_2, m_2)$ and try to find marking $m'_2$.

\begin{figure}[htbp]
  \includegraphics[width=\linewidth]{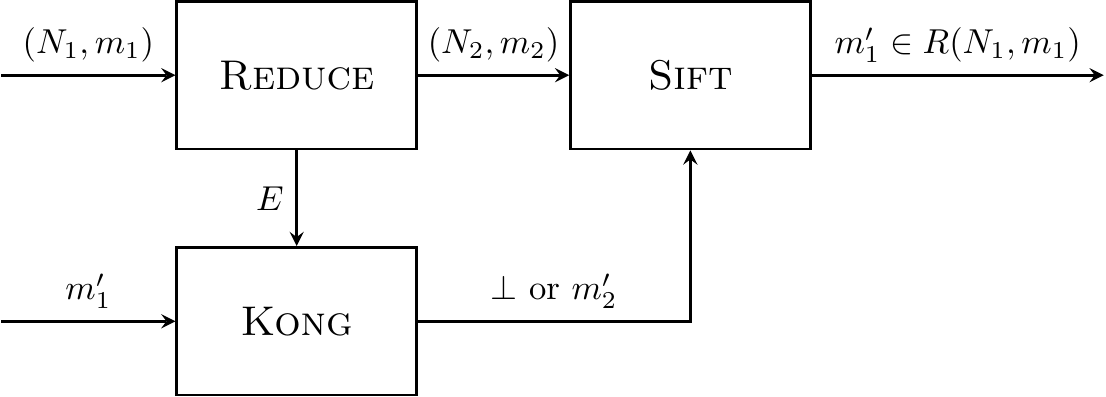}
  \caption{Toolchain of the reachability decision procedure.}
  \label{fig:reachability_decision}
\end{figure}

We describe our second toolchain in
Fig.~\ref{fig:concurrent_places}. After computing a polyhedral
reduction with \reduce, we compute the concurrency matrix of the
reduced net $(N_2, m_2)$ using \caesar, which is part
of the CADP toolbox~\cite{pbhg2021,cadp}. Our experimental results
have been computed with version v3.6 of \caesar, part
of CADP version 2022-b "Kista", published in February 2022. The tool
\kong\ takes this concurrency relation, denoted $\conc_2$, and the
reduction system, $E$, then reconstructs the concurrency relation on
the initial net.

\begin{figure}[htbp]
  \includegraphics[width=0.8\linewidth]{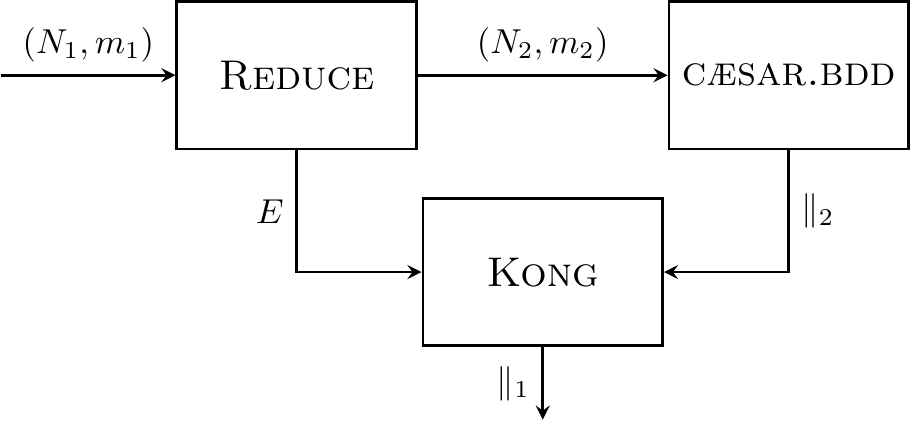}
  \caption{Toolchain of the concurrency acceleration algorithm.}
  \label{fig:concurrent_places}
\end{figure}

\subsection{Benchmarks and distribution of reduction ratios}

Our benchmark is built from a collection of $102$ models used in the MCC 2020
competition. Most models are parametrized, and therefore there can be several
different \textit{instances} for the same model. There are about $1\,000$
instances of Petri nets ($588$ that are safe), whose size vary widely, from $9$
to $50\,000$ places, and from $7$ to $200\,000$ transitions. Overall, the
collection provides a large number of examples with various structural and
behavioral characteristics, covering a lager variety of use cases.

\subsubsection{Distribution of reduction ratios}

Since we rely on how much reduction we can find in nets, we computed the
{reduction ratio} ($r$), obtained using \reduce, on all the instances (see
Fig.~\ref{fig:all_reduction}). The ratio is calculated as the quotient between
the number of places that can be removed, and the number of places in the
initial net. A ratio of $100\%$ ($r = 1$) means that the net is \emph{fully
reduced}; the residual net has no places and all the roots in its TFG are
constants.

We see that there is a surprisingly high number of models whose size
is more than halved with our approach (about $25$\% of the instances have a
ratio $r \ge 0.5$), with approximately half of the instances that can be reduced
by a ratio of $30\%$ or more. We consider two values for the reduction ratio:
one for reductions leading to a well-formed TFG (in light orange), the other for
the best possible reduction with \reduce\ (in dark blue), used for instance in the
SMPT model-checker \cite{tacas,fi2022}. The same trends can be observed for the safe nets
(Fig.~\ref{fig:safe_reduction}). 

\begin{figure}[h]
	\centering
  \begin{subfigure}[b]{\linewidth}
    \includegraphics[width=\linewidth]{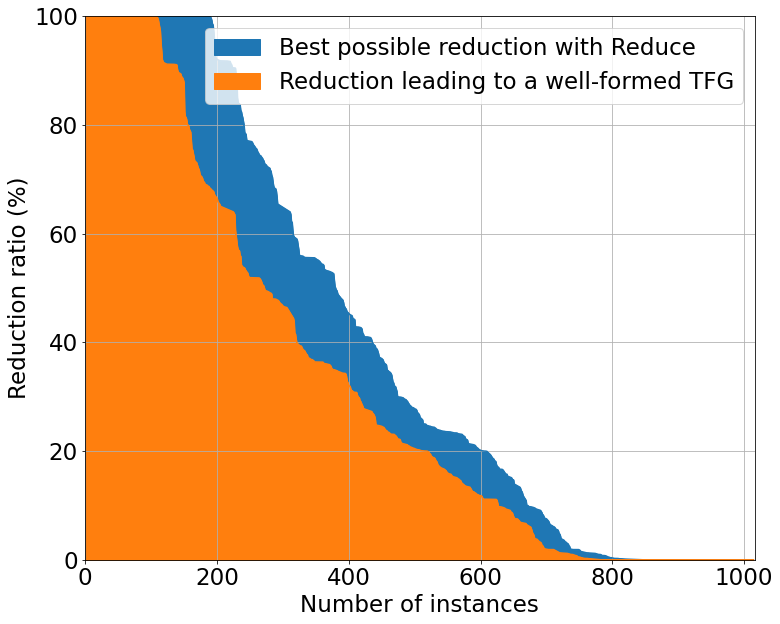}
    \caption{All instances}
    \label{fig:all_reduction}
  \end{subfigure}
  \begin{subfigure}[b]{\linewidth}
    \includegraphics[width=\linewidth]{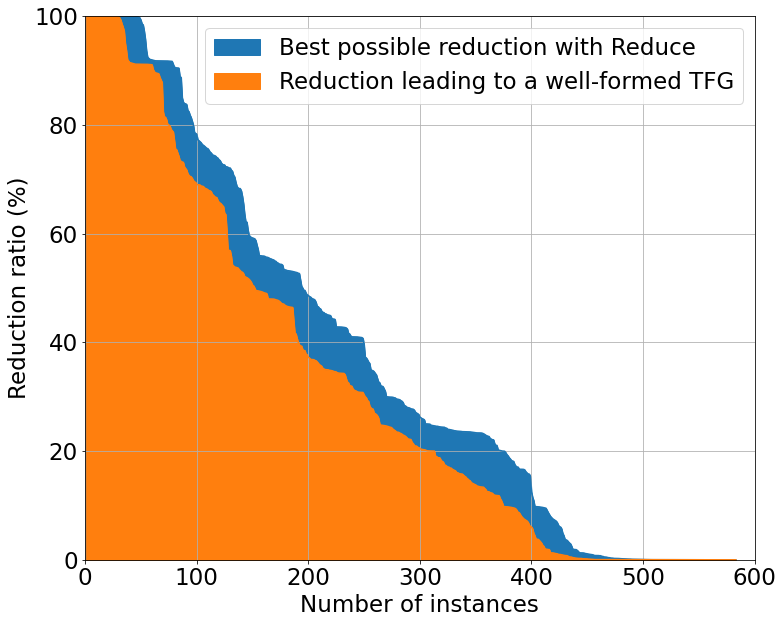}
	\caption{Safe instances}
	\label{fig:safe_reduction}
  \end{subfigure}
  \caption{Distribution of reduction ratios over: (a) all the instances in the MCC (b) all the safe instances in the MCC.}
	\label{fig:reduction} 
\end{figure}

We also observe that we lose few opportunities to reduce a net due to our
well-formedness constraint. Actually, we mostly lose the ability to simplify
some instances of ``partial'' marking graphs that could be reduced using
inhibitor arcs, or weights on the arcs (two features not supported by \caesar).

\subsubsection{Benchmark for marking reachability}

We evaluated the performance of \kong\ for the marking reachability
problem using a selection
of $426$ Petri nets taken from instances with a reduction ratio greater than
$1\%$. To avoid any bias introduced by models with a large number of instances,
we selected at most $5$ instances with a similar reduction ratio from each
model. For each instance, we generated $5$ queries that are markings
found using a ``random walk'' on the state space of the net (for this,
we used the tool \textsf{Walk} that is part of the Tina distribution). We ran \kong\  and \sift\ on
each of those queries with a time limit of \SI{5}{\minute}.

\subsubsection{Benchmark for concurrency relation}

We evaluated the performance of \kong\ on the $424$ instances of safe Petri nets
with a reduction ratio greater than $1\%$. We ran \kong\  and
\caesar\ on each of those instances, in two main modes: first with
a time limit of $\SI{15}{\minute}$ to compare the number of totally solved
instances (when the tool compute a complete concurrency matrix); next with a
timeout of $\SI{60}{\second}$ to compare the number of values (the filling
ratios) computed in the partial matrices.
Computation of a partial concurrency matrix with \caesar\ is done
in two phases: first a ``BDD exploration'' phase that can be stopped by the
user; then a post-processing phase that cannot be stopped. In practice this
means that the execution time on the initial net is often longer than with the
reduced one: the mean computation time for \caesar\ is about $\SI{48}{\second}$
and less than $\SI{12}{\second}$ for \kong.
In each test, we compared the output of \kong\  with the values obtained on the
initial net with \caesar, and achieved $100\%$ reliability.\\

Next, we give details about the results obtained with our experiments
and analyze the impact of using reductions.

\subsection{Results on marking reachability}

We display our results in the charts of
Fig.~\ref{fig:reachability_queries}, which compare the time needed to
compute a given number of queries, with and without using
reductions. (Note that we use a logarithmic scale for the time
value).  We consider two different samples of instances. First only
the instances with a high reduction ratio (in the interval $[0.5,
1]$), then the complete set of instances. 

We observe a clear advantage when we use reductions. For instance,
with instances that have a reduction ratio in the interval $[0.5, 1]$,
and with a time limit of \SI{5}{\minute}, we almost double the number
of computed queries (from $181$ with \sift\ alone, versus $357$ with
\kong). On the opposite, the small advantage of \sift\ alone, when the
running time is below $\SI{0.1}{\second}$, can be explained by the
fact that we integrate the running time of \reduce\ to the one of
\kong.

\begin{figure}[h]
	\centering
  \begin{subfigure}[b]{\linewidth}
    \includegraphics[width=\linewidth]{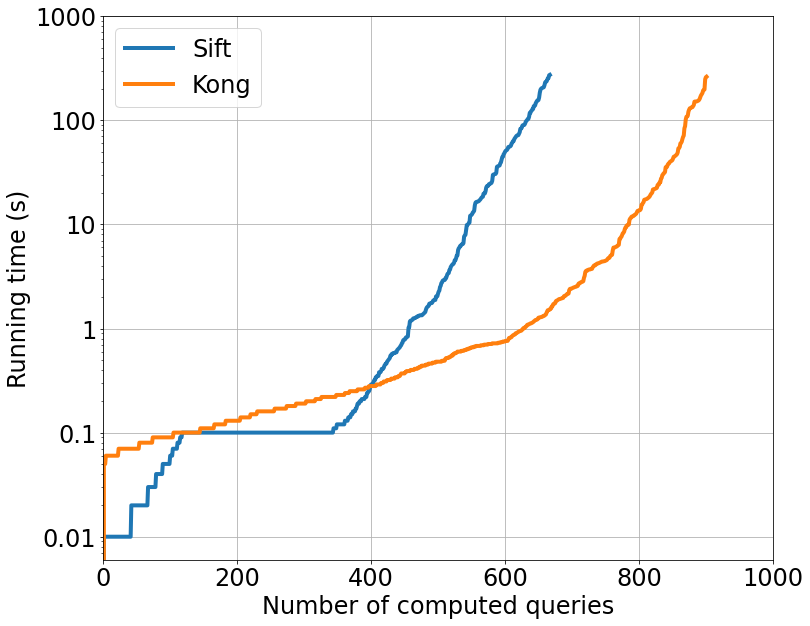}
    \caption{$r \in ]0, 1]$}
    \label{fig:reachability_queries_all} 
  \end{subfigure}
  
  \begin{subfigure}[b]{\linewidth}
    \includegraphics[width=\linewidth]{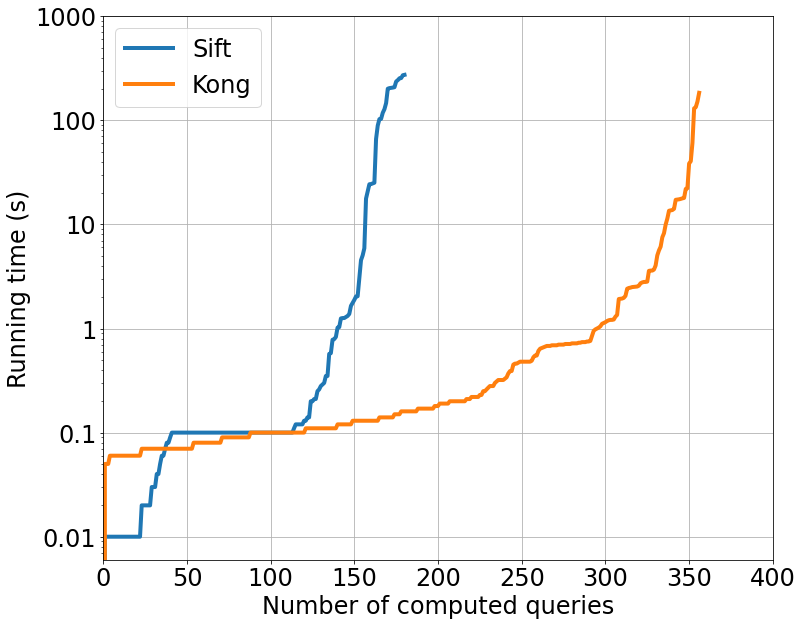}
    \caption{$r \in [0.5, 1]$}
    \label{fig:reachability_queries_high}
  \end{subfigure}

	\caption{Minimal running time to compute a given number of reachability
	queries for: (a) all instances, (b) instances with $r \in [0.5, 1]$.
	\label{fig:reachability_queries}}  
\end{figure}

\subsection{Totally computed concurrency matrices}

Our next results are for the computation of complete matrices, with a
timeout of $\SI{15}{\minute}$. We give the number of computed
instances in the table below. We split the results along three
different categories of instances, \emph{Low}/\emph{Fair}/\emph{High},
associated with different ratio ranges. We observe that we can compute
more results with reductions than without ($+30\%$). As could be
expected, the gain is greater on category \emph{High} ($+86\%$), but
it is still significant with the \emph{Fair} instances ($+28\%$).

\newcolumntype{x}[1]{>{\centering\arraybackslash\hspace{0pt}}p{#1}}
\begin{table*}[tb]
  \centering
  \begin{tabular}{l@{\quad}l c c c c x{1em} c c c}%
    \toprule
    \multicolumn{2}{c}{\multirow{2}{*}{\begin{minipage}[c]{6em}
          \centering\textsc{Reduction Ratio ($r$)} \end{minipage}}}
    && \multirow{2}{*}{
       \begin{minipage}[c]{6em}
         \centering \textsc{\# Test\\ Cases} \end{minipage}} &&
          \multicolumn{3}{c}{\textsc{\# Computed Matrices}}\\\cmidrule(rl){5-10} \multicolumn{2}{c}{} &&&&
           {\kong} && {\caesar}\\\midrule
    \emph{Low} & $r \in \; ]0, 0.25[$ && $160$ && \ratioc{85}{87} \\
    \emph{Fair} & $r \in \; [0.25, 0.5[$ && $112$ && \ratioc{49}{38} \\
    \emph{High} & $r \in \; [0.5, 1]$ && $152$ && \ratioc{95}{51} \\\hline\\[-1em]
    {Total} & $r \in \; ]0, 1]$ && $\fpeval{160 + 112 + 152}$ && \ratioc{\fpeval{85 + 49 + 95}}{\fpeval{87 + 38 + 51}}\\
    \bottomrule
  \end{tabular}
\end{table*}

Like in the previous case, we study the speed-up obtained with \kong\
using charts that compare the time needed to compute a given number of
instances; see Fig.~\ref{fig:complete_matrices}.

\begin{figure}[h]
	\centering
  \begin{subfigure}[b]{\linewidth}
    \includegraphics[width=\linewidth]{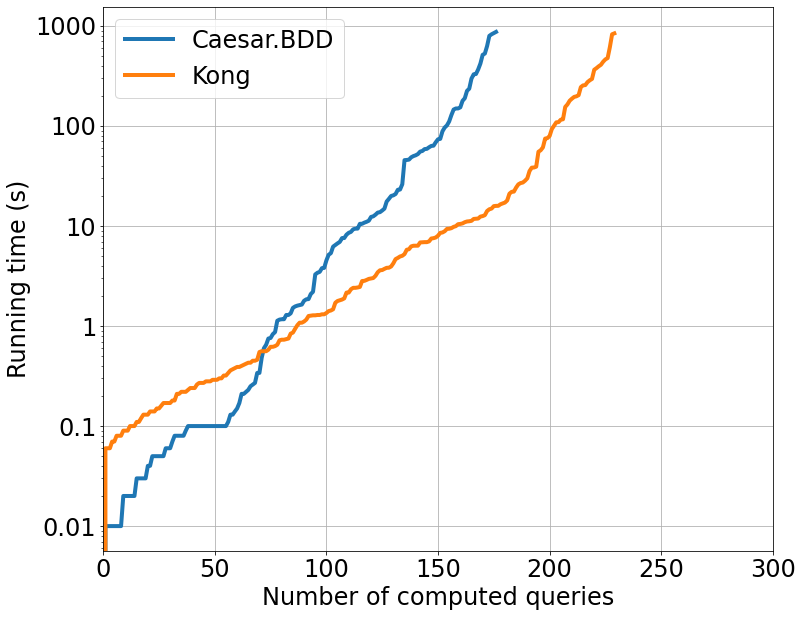}
    \caption{$r \in ]0, 1]$}
    \label{fig:complete_matrices_all} 
  \end{subfigure}
  
  \begin{subfigure}[b]{\linewidth}
    \includegraphics[width=\linewidth]{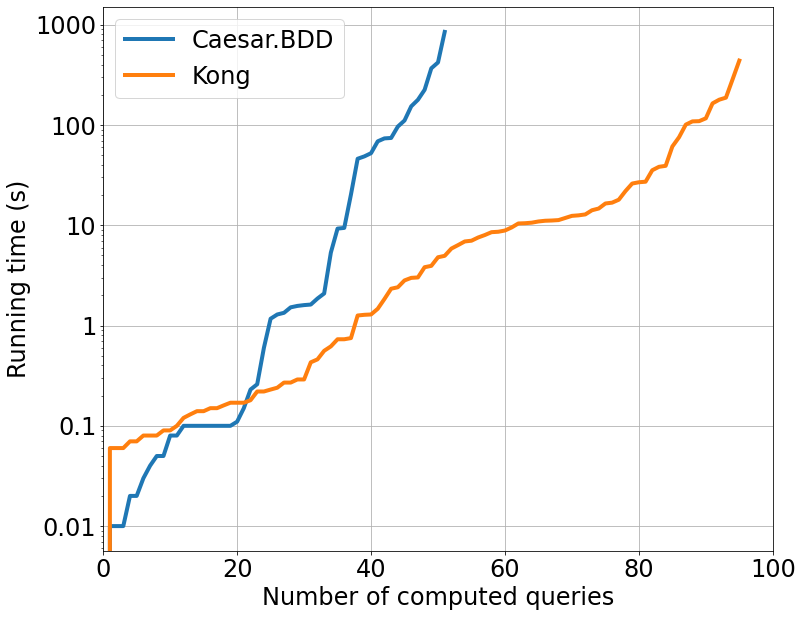}
    \caption{$r \in [0.5, 1]$}
    \label{fig:complete_matrices_high}
  \end{subfigure}

	\caption{Minimal running time to compute a given number of concurrency
	matrices for: (a) all instances, (b) instances with $r \in [0.5, 1]$.
	\label{fig:complete_matrices}}  
\end{figure}

\begin{figure}[tb]
  \centering
  \includegraphics[width=\linewidth]{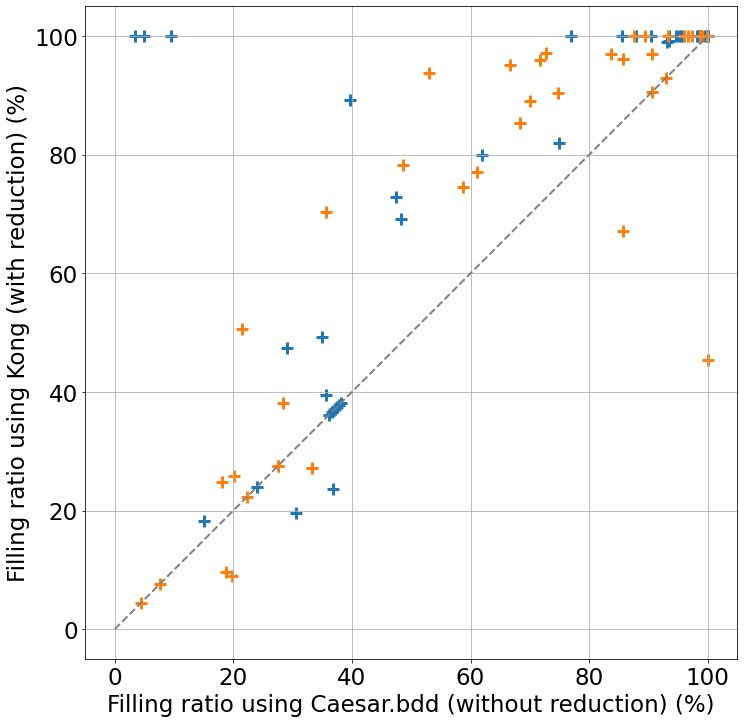}
  \caption{Comparing the filling ratio for partial matrices with \kong\
($y$-axis) and \caesar\ ($x$-axis) for instances with $r \in
[0.25, 0.5[$ (light orange) and $r \in [0.5, 1]$ (dark blue). (Computations done
with a BDD exploration timeout of \SI{60}{\second}.)}
	\label{fig:partial_computations}
\end{figure}

\subsection{Results with partial matrices}
We can also compare the ``accuracy'' of our approach when we have
incomplete results. To this end, we compute the concurrency relation
with a timeout of $\SI{60}{\second}$
on 
\caesar. We compare the \emph{filling ratio} obtained with and
without reductions. For a net with $n$ places, this ratio is given by the
formula ${2\, |\Conc|} / {(n^2 +n)}$, where $|\Conc|$ is the number of $0$s and
$1$s in the matrix.

We display our results using a scatter plot with linear scale, see
Fig.~\ref{fig:partial_computations}. We observe that almost all the
data points are on one side of the diagonal, meaning in this case that
reductions increase the number of computed values, with many examples
(top line of the plot) where we can compute the complete relation in
$\SI{60}{\second}$ only using reductions. The graphic does not
discriminate between the number of $1$s and $0$s, but we obtain
similar good results when we consider the filling ratio for only the
concurrent places (the $1$s) or only the nonconcurrent places (the
$0$s).

\newcommand{\partialm}[7]{${#1}\%$ & \num{#2} && \num{#3} & \num{#4} &
  $\times$\fpeval{round({#3} / {#4}, 1, 1)} &&
  $\times$\fpeval{round({#5}, 1, 1)} && \SI{\fpeval{round({#6}, 0, 1)}}{\second} &&
  \SI{\fpeval{round({#7}, 0, 1)}}{\second}}


\section{Conclusion and further work}

We propose a new method to transpose the computation of reachability
problems from an initial ``high-dimensionality'' domain (the set of
places in the initial net) into a smaller one (the set of places in
the reduced net).
Our approach is based on a combination of structural reductions with linear
equations first proposed in~\cite{berthomieu2018petri,berthomieu_counting_2019}.

Our main contribution, in the current work, is the definition of a new
data-structure that precisely captures the structure of these linear
equations, what we call the Token Flow Graph (TFG). We show how to use
the TFGs to accelerate the marking reachability problem, and also for
computing the concurrency relation, both in the complete and partial
cases.

We have several ideas on how to apply TFGs to other problems and how to extend
them. A natural application would be for model counting (our original goal
in~\cite{berthomieu2018petri}), where the TFG could lead to new algorithms for
counting the number of (integer) solutions in the systems of linear equations
that we manage.

Another possible application is the \emph{max-marking} problem, which means
finding the maximum of the expression $\sum_{p \in P} m(p)$ over all reachable
markings. On safe nets, this amounts to finding the maximal number of places
that can be marked together. We can easily adapt our algorithm to compute this
value and could even adapt it to compute the result when the net is not safe.

We can even manage a more general problem, related to the notion of
\emph{max-concurrent} sets of places. We say that the set $S$ is concurrent if
there is a reachable $m$ such that $m(p) > 0$ for all places $p$ in $S$. (This
subsumes the case of pairs and singleton of places.) The set $S$ is
\emph{max-concurrent} if no superset $S'\supsetneq S$ is concurrent.

Computing the max-concurrent sets of a net is interesting for several
reasons. First, it gives an alternative representation of the
concurrency relation that can sometimes be more space efficient: (1)
the max-concurrent sets provide a unique cover of the set of places of
a net, and (2) we have $p \conc q$ if and only if there is $S$
max-concurrent such that $\{p, q\} \subset S$. Obviously, on safe
nets, the size of the biggest max-concurrent set is the answer to the
\emph{max-marking} problem.

For future work, we would like to answer even more difficult questions, such as
proofs of Generalized Mutual Exclusion Constraints~\cite{giua1992generalized},
that requires checking invariants involving weighted sums over the marking of
places, of the form $\sum_{p \in P} w_p .  m(p)$. Another possible extension
will be to support non-ordinary nets (which would require adding weights on the
arcs of the TFG) and nets that are not safe (which can already be done with our
current approach, but require changing some ``axioms'' used in our
algorithm).\\

\noindent\textbf{Acknowledgements.} We would like to thank Pierre
Bouvier and Hubert Garavel for their insightful suggestions that
helped improve the quality of this paper.


\bibliographystyle{plain}
\balance
\bibliography{bibfile}

\end{document}